\documentclass[10pt,journal,comsoc]{IEEEtran}

\usepackage{cite}

\usepackage[T1]{fontenc}% optional T1 font encoding
\usepackage{bm}
\usepackage{amsmath}
\usepackage[table,xcdraw]{xcolor}

\usepackage{amsthm}
\usepackage{float}
\usepackage{setspace}
\usepackage{amssymb}
\usepackage{stfloats}
\usepackage{cite}
\usepackage{ragged2e}
\usepackage[ruled,vlined,linesnumbered]{algorithm2e}
\usepackage{amsfonts}
\usepackage{mathrsfs}
\usepackage{amsmath,amsthm}
\usepackage{array,booktabs}
\usepackage{subfigure}
\usepackage{multirow}
\usepackage{cuted}
\usepackage{multicol}
\usepackage{graphicx}
\usepackage{subfigure}
\usepackage{graphicx,xcolor,bm}
\usepackage{hyperref}
\usepackage{threeparttable}
\usepackage{dcolumn}
\usepackage{setspace}
\usepackage{makecell}
\usepackage{lipsum}
\usepackage{enumerate}
\usepackage{mathrsfs}
\usepackage{bbm}

\newtheorem{Defn}{Definition}

\newtheorem{thm}{Theorem}

\usepackage{xfrac}

\usepackage{cite,bm}
\graphicspath{{figures/}}
\def\BibTeX{{\rm B\kern-.05em{\sc i\kern-.025em b}\kern-.08em
		T\kern-.1667em\lower.7ex\hbox{E}\kern-.125emX}}
\usepackage{balance}
\usepackage{nicefrac}
\allowdisplaybreaks
\begin{document}
	\title{
    \vspace{-4mm}
    STEPS: Semantic Contract-Guided Scheduling for LLM-Assisted Natural Language-Driven Edge AI Services}
	
\author{Houyi Qi, \IEEEmembership{Graduate Student Member}, \IEEEmembership{IEEE}, Minghui Liwang, \IEEEmembership{Senior Member}, \IEEEmembership{IEEE}, Xianbin Wang, \IEEEmembership{Fellow}, \IEEEmembership{IEEE}, Xinlei Yi, \IEEEmembership{Senior Member}, \IEEEmembership{IEEE}, and Seyyedali Hosseinalipour, \IEEEmembership{Senior Member}, \IEEEmembership{IEEE}
   \vspace{-4mm}
	\thanks{
H. Qi (houyiqi@tongji.edu.cn), M. Liwang (minghuiliwang@
tongji.edu.cn), and X. Yi (xinleiyi@tongji.edu.cn) are with the Shanghai Research Institute for Intelligent Autonomous Systems, State Key Laboratory of Autonomous
Intelligent Unmanned Systems, Frontiers Science Center for Intelligent
Autonomous Systems, and Department of Control Science and Engineering, Tongji University, Shanghai, China. X. Wang (xianbin.wang@uwo.ca) is with the Department of Electrical and Computer Engineering, Western University, Ontario, Canada. S. Hosseinalipour (alipour@buffalo.edu) is with the Department of Electrical Engineering, University at Buffalo-SUNY, USA.}}

\setlength{\abovedisplayskip}{2pt}
\setlength{\belowdisplayskip}{2pt}
\setlength{\skip\footins}{6pt}
\setlength{\footnotesep}{0pc}
\setstretch{.94}    % adjust the number as needed
\everymath{\small}
\everydisplay{\footnotesize}

\IEEEtitleabstractindextext{
	\begin{abstract}
		\justifying
        Edge user/service scheduling has become a cornerstone of distributed AI systems, governing \textit{where} and \textit{how} AI services are executed under limited communication and computing resources. Nevertheless, existing edge scheduling frameworks typically assume that service requirements are explicitly specified as numerical quantities/constraints, such as latency bounds or energy budgets. In practice, however, users often express their service expectations through ambiguous and context-dependent natural language descriptions (e.g., “finish quickly” or “save energy”), creating a fundamental gap between user intent and scheduling decisions. Bridging this semantic-to-optimization gap is particularly challenging in dynamic edge environments with time-varying resources and execution conditions. To address this challenge, we propose \textit{\underline{s}emantic con\underline{t}ract--guided \underline{e}dge \underline{p}otential \underline{s}cheduling (STEPS)}, a natural language-driven scheduling framework that introduces \textit{semantic contracts} as executable interfaces between user-side semantics and edge-side decision making. Specifically, in STEPS, a large language model (LLM)-assisted semantic parser interprets natural language requests and extracts semantic service requirements together with confidence scores, which are subsequently converted into service requirements and semantic uncertainty. Based on this information, STEPS formulates edge scheduling as a \textit{contract-guided potential game} that jointly determines execution-node selection, computing-resource provisioning, and bandwidth allocation. STEPS further incorporates a set of carefully crafted \textit{feedback signals} to enable adaptive scheduling under evolving service and network conditions. We characterize the \textit{exact potential game structure} of the STEPS scheduling problem, establish the \textit{existence of pure-strategy Nash equilibrium}, and prove \textit{convergence and stability properties} of the associated scheduling and adaptation processes. Extensive experiments demonstrate that STEPS improves semantic contract fulfillment, reduces contract-guided service loss, and maintains robust adaptation under ambiguous natural language requests in non-stationary networked AI environments.
	\end{abstract}

	\begin{IEEEkeywords}
		Natural Language-Driven Scheduling, Semantic Contract, Edge Intelligence, Potential Game.
	\end{IEEEkeywords}
}

\maketitle
 \IEEEdisplaynontitleabstractindextext

\IEEEpeerreviewmaketitle
\section{Introduction}
\IEEEPARstart{R}{ecent} advances in artificial intelligence (AI) and computing technologies have led to the emergence of \textit{networked AI}, where AI models are deployed to enable intelligent services, such as industrial automation, autonomous driving, and immersive human-machine interaction~\cite{mao2024green,he2025task}. Unlike conventional cloud-based AI services, networked AI operates over the distributed and time-varying device-edge-cloud continuum, where communication (e.g., data transmission) and computation (e.g., modeling training and inference) are tightly coupled with execution feedback loops. In this context, edge computing provides a natural substrate of \underline{e}dge services for \underline{net}worked \underline{AI} (E4NetAI), enabling latency-sensitive tasks to be executed closer to end users and thus reducing end-to-end service latency~\cite{jia2025comprehensive,wang2024cooperative}. Motivated by this premise, a major body of work on E4NetAI has focused on distributed device/service scheduling and network orchestration schemes~\cite{mao2024green,qi2025future}. However, such works largely follow a \textit{parameter-driven} paradigm that requires users to explicitly specify low-level quantitative requirements (e.g., delay tolerances, energy budgets, and payment limits)~\cite{qi2025future,fan2025latency}. This assumption is often unrealistic for non-expert users who lack backgrounds in the detailed pipelines of innovative technologies. In fact, 
% instead of calibrating maximum tolerable latency, energy budgets, payment limits, or preference weight vectors, 
ordinary users mostly tend to express their subjective preferences and service expectations in natural language~\cite{mekrache2024llm,jacobs2021hey,ji2024computational}, e.g., ``finish this task as quickly as possible without incurring excessive cost''. As a result, the user interactions in E4NetAI services is expected to shift from \textit{expert-oriented parameter configuration} toward \textit{human-centric semantic interaction}. While this shift lowers the access barrier to networked AI services for the users, it also poses a major challenge to the pipeline of networked services: \textit{transforming  implicit and uncertain natural language-based user intents into executable and optimizable service specifications that are consistent with edge resource constraints.}

To address this challenge, recent advances in intent-based networking (IBN) and large language models (LLMs) provide new opportunities. Specifically, IBN translates high-level user/operator intents into network policies and configurations~\cite{wang2025survey,zhang2025communication,zhang2025split}, while LLM-assisted network orchestration enables natural language interpretation and automated service configuration~\cite{mekrache2024llm,qin2025generative}. Nevertheless, directly applying existing IBN and LLM-assisted orchestration techniques to E4NetAI remains insufficient. This is because while IBN and LLMs are effective at interpreting high-level user/operator intents and generating service policies, they do not inherently provide an optimization-compatible representation that can reliably bridge ambiguous user semantics and resource-constrained edge decision-making. 
Motivated by this shortcoming, we develop \textit{a unified semantic contract framework that bridges intent interpretation and execution through contract-based decision evolution}. In particular, our framework aims to address  
% Realizing such a framework requires addressing several fundamental challenges spanning semantic grounding, contract-guided scheduling, and fulfillment-driven adaptation. Accordingly, we distill
the following three research questions (RQs).

\noindent
$\bullet$ \textit{RQ 1: How can unstructured natural language service intents be systematically grounded into a structured and constraint-consistent decision representation for E4NetAI?} Addressing this question is important because natural language service requests are inherently ambiguous, qualitative, and context-dependent, whereas edge schedulers require precise representations that can be evaluated and optimized under resource constraints. Moreover, while existing IBN and LLM-assisted orchestration approaches can translate user intents into high-level policies or control directives~\cite{wang2025survey,mekrache2024llm}, they still lack an explicit intermediate representation that simultaneously preserves semantic meaning and enables optimization-compatible decision making. Therefore, a key challenge is to establish an executable interface between user-side semantics and edge-side scheduling decisions while maintaining feasibility under system performance, cost, and execution constraints.

\noindent
$\bullet$ \textit{RQ 2: How can semantic contracts reshape the structure of edge scheduling under tightly coupled constraints and semantic uncertainty?}
This question stems from the fact that, once natural language intents are transformed into semantic contracts, service requirements are no longer represented solely by fixed numerical constraints. Instead, they become preference-aware, uncertainty-aware, and fulfillment-oriented objectives that must be simultaneously considered during user/service scheduling. Consequently, conventional edge scheduling formulations, which primarily optimize resource utilization or quality-of-service metrics under predefined constraints~\cite{qi2025future,li2024optimal}, are no longer applicable. Therefore, a key challenge is to design a scheduling framework that jointly determines execution placement, computing-resource provisioning, and bandwidth allocation while accounting for semantic contract requirements, resource limitations, and semantic uncertainty.

\noindent
$\bullet$ \textit{RQ 3: How can post-execution contract-satisfaction feedback enable principled adaptation in non-stationary E4NetAI systems?}
This question arises because E4NetAI operates in highly dynamic environments, where user preferences,  resource availability, and execution conditions  evolve over time. Nevertheless, existing adaptive optimization and drift-aware learning approaches primarily focus on observable variations in data distributions, network states, or model performance~\cite{kalntis2024adaptive}. As a result, they largely overlook the mismatch between semantic contract expectations and realized service outcomes, which can only be evaluated after task execution. Subsequently, a key challenge is to construct meaningful contract-satisfaction feedback signals and distinguish between changes originating from user-side semantic evolution (i.e., \textit{semantic-request drift}, where the expectations expressed in user requests change over time) and changes caused by system-side execution conditions (i.e., \textit{contract-fulfillment drift}, where contract-satisfaction outcomes vary due to fluctuations in resources, network conditions, or service performance).

To answer the above RQs, we propose \textit{\underline{s}emantic con\underline{t}ract--guided \underline{e}dge \underline{p}otential \underline{s}cheduling (STEPS)}. In a nutshell, \textit{to address RQ1,} STEPS introduces LLM-driven semantic contracts as an executable interface between natural language intents and edge service decisions.
% In doing so, STEPS leverages an LLM-assisted semantic parser to interpret user requests and extract semantic service requirements together with a confidence score that quantifies the reliability of the semantic interpretation, which are then converted into service preferences, fulfillment bounds, and semantic uncertainty, thereby enabling optimization-compatible decision making. 
\textit{To address RQ 2}, STEPS formulates a semantic contract-guided edge scheduling problem under coupled communication-computation constraints, where the decision variables are optimized through a distributed potential game-based equilibrium. \textit{To address RQ 3}, STEPS evaluates post-execution contract fulfillment and constructs feedback signals to enable adaptive
scheduling under evolving service and network conditions. Subsequently, our main contributions in this work can be summarized as follows.

\noindent $\bullet$ 
We propose STEPS, which 
% is one of the first semantic contract-based adaptive scheduling framework for natural language-driven E4NetAI.
% STEPS 
establishes one of the first closed-loop pipelines from natural language service requests to semantic abstraction, contract-guided edge scheduling, post-execution contract-satisfaction evaluation, and fulfillment-driven adaptation. Further, through introducing the semantic contract as an executable  interface, STEPS bridges user-side natural language semantics and edge resource optimization.

\noindent $\bullet$ 
We design an LLM-empowered semantic contract generation engine
% that converts natural language requests into optimization-compatible service contracts.
% Specifically, an LLM-assisted semantic parser
that
analyzes each request and extracts semantic service requirements (e.g., delay and cost sensitivity) together with a confidence score that quantifies the reliability of the semantic interpretation. Based on these extracted semantics, we utilize interpretable mapping rules to combine user requirements, task attributes, and system-side context to construct a semantic contract,
% consisting of service preferences, fulfillment bounds, and semantic uncertainty. 
which serves as an executable representation of user intent that can be directly used in scheduling decisions.

% By restricting the LLM to semantic interpretation rather than resource optimization, our framework avoids relying on unconstrained LLM-generated resource allocations while preserving interpretability and optimization consistency.

\noindent $\bullet$ 
% We develop a contract-guided edge scheduling engine for distributed execution-node selection and resource allocation in E4NetAI. 
Guided by the semantic contracts generated from user requests, we develop an edge scheduling engine that jointly determines task-execution placement, computing-resource provisioning, and bandwidth allocation while respecting communication and computation resource constraints. 
% The scheduling objective explicitly captures the tradeoff among service delay, energy consumption, monetary cost, trustworthiness-for-execution, and semantic contract fulfillment. 
To efficiently solve the formulated scheduling problem, we reformulate each time-slot scheduling problem as an exact potential game and develop an asynchronous best-response mechanism for distributed (i.e., scalable) equilibrium computation. 
% This design enables scalable and contract-aware scheduling without requiring centralized optimization across all users and edge servers.

\noindent $\bullet$ 
% We design a set of feedback signals, which
%%  for non-stationary E4NetAI. Rather than adapting solely to changes in data distributions or network conditions, our proposed engine 
% evaluate how well the executed services satisfy their semantic contracts. Specifically, they capture the semantic-request drift and contract-fulfillment drift, and instantaneous fulfillment pressure arising from contract violations, while using admission pressure to regulate the acceptance of highly uncertain requests. Based on these feedback signals, our framework dynamically updates semantic-admission thresholds, contract conservativeness factors, and edge-coordination gains. 
We design a set of feedback signals that evaluate how well the executed services satisfy their semantic contracts. Specifically, they capture semantic-request drift, contract-fulfillment drift, and fulfillment pressure arising from contract violations, while using admission pressure to regulate highly uncertain requests. Based on these signals, our framework dynamically updates semantic-admission thresholds, contract conservativeness factors, and edge-coordination gains.

% This design enables fulfillment-driven adaptation and sustained scheduling performance under evolving user semantics, resource availability, and service conditions.

\noindent $\bullet$ 
% We provide theoretical analysis and numerical evaluations of STEPS.
% On the theoretical side, 
We characterize the exact potential game structure of the contract-guided scheduling problem and establish key properties of the proposed framework, including the existence of pure-strategy Nash equilibrium, finite-step convergence of the asynchronous best-response dynamics, and boundedness of the feedback-driven adaptive update process.

\noindent $\bullet$  Through numerical evaluations, we demonstrate that STEPS improves semantic contract fulfillment, reduces contract-guided service loss, and maintains robust adaptation under ambiguous natural language requests, semantic uncertainty, and non-stationary network conditions.

\section{Literature Review}

In the following, we review the literature across three interrelated domains
% : (i) IBN and LLM-assisted service orchestration, (ii) edge AI service scheduling and provisioning, and (iii) drift-aware adaptive optimization. Throughout the review, we emphasize 
and identify the fundamental differences between these research directions and STEPS. 
% Table~\ref{tab:related_work_summary} summarizes our discussions.

~\textit{(i) IBN and LLM-Assisted Service Orchestration:}
IBN serves as an important paradigm for simplifying network management by allowing users/operators to express high-level service intents instead of manually specifying low-level configurations~\cite{wang2025survey,mekrache2024intent}. 
% Within this paradigm, existing research has primarily focused on translating user intents into network policies, service descriptors, and management actions through processes such as intent decomposition, policy generation, and closed-loop assurance. 
Recent advances in LLMs have further expanded the capabilities of intent-driven orchestration by enabling natural language understanding and automated decision support. For example, \textit{Mekrache et al.}~\cite{mekrache2024intent} proposed an LLM-centric intent-based management architecture for next-generation networks that supports intent decomposition, translation, activation, and assurance. \textit{Mekrache and Ksentini}~\cite{mekrache2024llm} further studied LLM-enabled intent-driven service configuration, where natural language intents are translated into network service descriptors. At a larger operational scale, \textit{Wang et al.}~\cite{wang2025intent} developed a multi-agent LLM framework that decomposes complex network management workflows into coordinated subtasks executed by multiple agents. Beyond traditional network management, recent studies have extended intent-driven paradigms toward computing and edge-service environments. For example, \textit{Akbari et al.}~\cite{akbari2025intentcontinuum} investigated LLM-assisted intent-based computing, where user-defined intents are monitored and handled through diagnosis and reconfiguration. \textit{Qin et al.}~\cite{qin2025generative} explored the integration of generative AI and intent-driven wireless communications to facilitate interactions between users and communication systems. \textit{Sun et al.}~\cite{sun2026igaa} proposed an agentic-AI  framework that maps user intents into resource-oriented representations and leverages generative meta-learning to improve cross-scenario generalization. \textit{These prior studies
% have explored LLM-enabled and intent-driven network orchestration, they 
predominantly treat natural language inputs as mechanisms for generating policies, management actions, or scheduling directives.} As a result, they provide limited support for representing user requests as uncertainty-aware semantic requirements that can be directly incorporated into resource-constrained optimization. Subsequently, our work introduces semantic contracts as executable interfaces between user-side semantics and edge-side scheduling decisions, enabling interpretable semantic grounding, contract-guided optimization, and fulfillment-driven adaptation within a unified framework.

~\textit{(ii) Edge AI Service Scheduling and Provisioning:}
% Edge intelligence has been widely studied to support latency-sensitive and computation-intensive AI services by bringing computation closer to end users. 
% With the rapid development of generative AI and foundation models, 
Edge networks are increasingly utilized for delivering real-time and personalized AI services to the users under limited communication/computing resources~\cite{mao2024green,xu2024unleashing,qu2025mobile,Abdisarabshali2026Hierarchical}.  Consequently, a significant body of research has focused on how to efficiently schedule and provision AI workloads across devices, edge servers, and cloud resources. 
% In particular, to improve service efficiency under resource constraints, existing studies have explored a variety of edge AI scheduling and orchestration mechanisms. 
For example, \textit{Li and Bi}~\cite{li2024optimal} jointly optimized AI model partitioning and wireless resource allocation for device-edge collaborative inference. \textit{Xiao et al.}~\cite{xiao2024adaptive} studied a content-aware compression and offloading framework for efficient edge-based vision inference. \textit{Zhang et al.}~\cite{zhang2025beyond} investigated resource-efficient deployment of generative LLM inference at the network edge through  optimizing quantization and resource scheduling.
% under heterogeneous latency and accuracy requirements. 
\textit{These prior studies generally assume that service requirements are explicitly stated and available in the form of  numerical specifications (e.g., latency bounds or accuracy targets); however, user requests are often expressed through qualitative and ambiguous descriptions.} To this end, our work  addresses the semantic-to-optimization gap by transforming natural language user requests into semantic contracts that directly guide resource-constrained service provisioning at the network edge.

~\textit{(iii) Drift-Aware and Adaptive Optimization:}
Adaptive optimization and drift-aware adaptation have attracted growing attention for detecting changes in system behavior and adapting learning models or control policies to maintain performance over time. For example, \textit{Ganguly and Aggarwal}~\cite{ganguly2024online} developed an online federated learning framework integrating drift detection to improve performance under online data. \textit{Gudepu et al.}~\cite{gudepu2024drift} proposed a drift-management framework for Open Radio Access Network (O-RAN), that incorporates drift detection and adaptive mitigation mechanisms to reduce service-level-agreement violations and improve resource utilization. \textit{Kalntis et al.}~\cite{kalntis2024adaptive} studied adaptive service provisioning for virtualized base stations in O-RAN, where resource-control decisions are updated according to time-varying network conditions/states. \textit{Uzlaner et al.}~\cite{uzlaner2025asynchronous} developed a modular drift-detection framework that determines when and which components of a deep learning-based receiver should be retrained under dynamic channel conditions. \textit{Ameur et al.}~\cite{ameur2025dual} designed a dual self-attention mechanism for detecting data drift, label drift, and concept drift in 6G networks. \textit{Although these studies enhance robustness and adaptability in non-stationary environments, their considered adaptation signals are predominantly derived from observable system characteristics, such as data distributions, model performance, and network condition.} However, natural language-driven E4NetAI introduces an additional source of non-stationarity arising from evolving user semantics and their resulting fulfillment outcomes. Consequently, the impact of semantic-request drift (i.e., changes in the characteristics of user requests) and contract-fulfillment drift (i.e., changes in contract-satisfaction outcomes caused by variations in execution conditions) remains largely unexplored. Our work addresses this gap by jointly modeling semantic-request drift, fulfillment drift, and instantaneous fulfillment pressure, and leveraging these signals to drive fulfillment-aware edge scheduling adaptation mechanisms.

\vspace{-1mm}
\section{System Model and Problem Formulation}
\label{sec:system_problem}
\label{sec:system_model}
\label{sec:overview}

\begin{figure}[t]
\vspace{-4mm}
	\centering
	\setlength{\abovecaptionskip}{-1mm}
	\includegraphics[width=1\columnwidth]{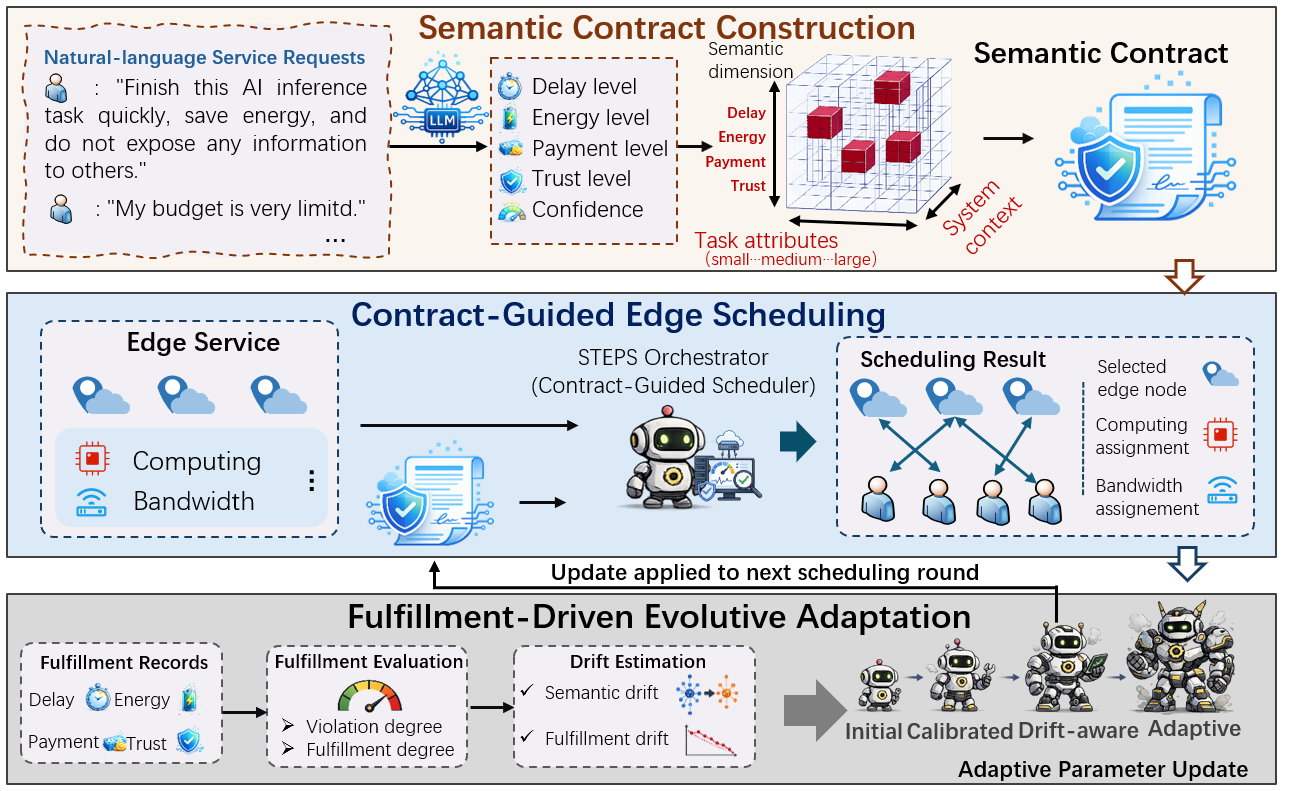}
	\caption{STEPS framework for natural language-driven E4NetAI. Natural-language service requests are first transformed into semantic contracts, which then guide edge scheduling. Execution feedback further drives fulfillment-aware adaptive parameter updates for subsequent scheduling rounds.}
    \vspace{-5mm}
	\label{fig:steps_framework}
\end{figure}

% As illustrated in Fig.~\ref{fig:steps_framework}, STEPS operates over a time-varying E4NetAI system across a time-slotted horizon $\mathcal{T}=\{1,2,\ldots,T\}$. At each timeslot $t\in \mathcal{T}$, the edge service platform (ESP) observes the system context and receives user device (UD) requests characterized by task attributes and natural language service descriptions. The descriptions are translated/parsed into semantic service requirements with associated confidence and subsequently instantiated as semantic contracts that encode user preferences, fulfillment bounds, and semantic uncertainty. The resulting contracts are then calibrated to account for uncertainty and integrated into edge execution decisions. Based on this semantic contract framework, 
In the following, we present the system model and formulate the long-term contract-guided scheduling problem of STEPS.

\vspace{-2mm}
\subsection{Time-Slotted Edge Service Platform}
\label{sec:edge_system}

As illustrated in Fig.~\ref{fig:steps_framework}, STEPS operates over a time-varying E4NetAI system across a time-slotted horizon {\small$\mathcal{T}=\{1,2,\ldots,T\}$}. At each timeslot {\small$t\in \mathcal{T}$}, the edge service platform (ESP) observes the system context and receives user device (UD) requests characterized by task attributes and natural language service descriptions.
We consider an ESP formed by a set of geographically distributed edge servers (ESs), denoted by 
{\small$\mathcal{E}=\{e_1,e_2,\ldots,e_{|\mathcal{E}|}\}$}, which collaboratively deliver E4NetAI services. To account for network dynamics, at each timeslot {\small$t\in\mathcal{T}$}, we use {\small$\mathcal{U}^{(t)}$} to denote the set of UDs that submit service requests to the ESP. 
% After semantic contract generation and admission, 
The ESP ultimately determines whether the task of each UD should be executed \textit{locally} or \textit{offloaded} to an ES for remote execution.

To enable a unified treatment of local and edge execution decisions, we abstract all possible task execution locations as execution nodes (ENs). Specifically, we define the EN set as
{\small$\mathcal{E}_0=\{e_0\}\cup\mathcal{E}$}, 
where the virtual node {\small$e_0$} denotes local execution on the requesting UD and each {\small$e_j\in\mathcal{E}$} corresponds to an ES. For each ES {\small$e_j$}, we let {\small$F_{j}^{(t)}$} and {\small$B_{j}^{(t)}$} denote its available computing and bandwidth capacity at timeslot {\small$t$}, respectively. For local execution, {\small$F_{i}^{\mathsf{loc},(t)}$} denotes the available computing capacity of UD {\small$u_i$}. Furthermore, each EN is associated with a trustworthiness-for-execution (T4E) score. Specifically, {\small$\sigma_{j}^{\mathsf{edge},(t)}\in[0,1]$} denotes the T4E score of ES {\small$e_j$}, while {\small$\sigma_{i,0}^{\mathsf{loc},(t)}\in[0,1]$} denotes the T4E score of UD {\small$u_i$}'s local device. Larger T4E scores indicate a higher degree of confidence that the corresponding EN can reliably execute the task and satisfy the trustworthiness-related requirement specified in the semantic contract\footnote{In this paper, trust is not modeled as a standalone security construct; instead, it is incorporated as one dimension of semantic contract fulfillment during task execution.}.
The above resource and trust states, together with channel, queueing, and price information, constitute the \textit{system-side context} used by the ESP before contract generation and scheduling. Specifically, the system-side context at timeslot $t$ is defined as
\begin{equation}
\hspace{-2mm}\resizebox{0.94\linewidth}{!}{$\begin{aligned}
		\xi^{(t)}
		\hspace{-.5mm}\triangleq\hspace{-.5mm}
		\big(
		\mathbf{F}^{(t)},\mathbf{B}^{(t)},\mathbf{F}^{\mathsf{loc},(t)},
		\mathbf{H}^{(t)},\mathbf{Q}^{(t)},
		\boldsymbol{\lambda}^{\mathsf{f},(t)},\boldsymbol{\lambda}^{\mathsf{b},(t)},
		\boldsymbol{\sigma}^{\mathsf{edge},(t)},\boldsymbol{\sigma}^{\mathsf{loc},(t)}
		\big),
	\end{aligned}$}\hspace{-2mm}
	\label{eq:system_context}
\end{equation}
where {\small $\mathbf{F}^{(t)}=[F_j^{(t)}]_{e_j\in\mathcal{E}}$} and {\small $\mathbf{B}^{(t)}=[B_j^{(t)}]_{e_j\in\mathcal{E}}$} denote the available computing and bandwidth capacities of ESs, {\small $\mathbf{F}^{\mathsf{loc},(t)}=[F_i^{\mathsf{loc},(t)}]_{u_i\in\mathcal{U}^{(t)}}$} denotes the available local computing capacities of UDs, {\small$\mathbf{H}^{(t)}=[h_{i,j}^{(t)}]_{u_i\in\mathcal{U}^{(t)},e_j\in\mathcal{E}}$} denotes the UD-to-ES wireless channel-gain matrix, and {\small$\mathbf{Q}^{(t)}=[Q_j^{(t)}]_{e_j\in\mathcal{E}}$} denotes the pre-decision queueing-delay state of ESs. Also, {\small$\boldsymbol{\lambda}^{\mathsf{f},(t)}=[\lambda_j^{\mathsf{f},(t)}]_{e_j\in\mathcal{E}}$} and {\small$\boldsymbol{\lambda}^{\mathsf{b},(t)}=[\lambda_j^{\mathsf{b},(t)}]_{e_j\in\mathcal{E}}$} denote the current unit price coefficients of ES computing and bandwidth resources, respectively. Moreover, the vectors {\small$\boldsymbol{\sigma}^{\mathsf{edge},(t)}=[\sigma_j^{\mathsf{edge},(t)}]_{e_j\in\mathcal{E}}$ and $\boldsymbol{\sigma}^{\mathsf{loc},(t)}=[\sigma_{i,0}^{\mathsf{loc},(t)}]_{u_i\in\mathcal{U}^{(t)}}$} collect the T4E scores of ESs and local UD devices, respectively. This system-side context is maintained by the ESP and will be used in subsequent semantic contract generation and contract-guided scheduling.

\vspace{-2.5mm}
\subsection{Natural Language Request and Semantic Contract}
\label{sec:nl_request_model}

At each timeslot {\small$t$}, each UD {\small$u_i\in\mathcal{U}^{(t)}$} submits a service request consisting of a task attribute vector {\small$q_i^{(t)}$} and its associated service-expectation description (SED) {\small$\mathcal{L}_i^{(t)}$}, represented by {\small$(q_i^{(t)},\mathcal{L}_i^{(t)})$}. The task attribute vector is defined as {\small$q_{i}^{(t)}=(d_{i}^{(t)},c_{i}^{(t)})$}, where {\small$d_{i}^{(t)}$} denotes the input data size and {\small$c_{i}^{(t)}$} denotes the computation workload required for task completion. Together, these attributes characterize the task's fundamental communication and computation requirements. The SED {\small$\mathcal{L}_{i}^{(t)}$} expresses user service requirements at a semantic level in natural language,
% . In particular, rather than specifying numerical constraints or preference weights, users describe desired service outcomes through qualitative statements 
such as “I do not have much time and budget" or "my phone is running out of power." Consequently, unlike conventional edge scheduling frameworks that rely on explicitly stated numerical constraints, STEPS treats natural language descriptions as the primary user-side input and translates them into operational requirements through a semantic-contract interface.
Specifically, the SED {\small$\mathcal{L}_{i}^{(t)}$} is processed by an LLM-assisted semantic parser
% that is restricted to semantic interpretation and does not directly generate scheduling decisions or resource-allocation parameters. Instead, the parser 
that extracts a semantic feature vector {\small$\mathbf{y}_{i}^{(t)}=
(\ell_{i}^{\mathsf{D},(t)},\ell_{i}^{\mathsf{E},(t)},\ell_{i}^{\mathsf{M},(t)},\ell_{i}^{\mathsf{S},(t)},c_{i}^{\mathsf{conf},(t)}) $}, where {\small$\ell_{i}^{\mathsf{D},(t)}$}, {\small$\ell_{i}^{\mathsf{E},(t)}$}, {\small$\ell_{i}^{\mathsf{M},(t)}$}, and {\small$\ell_{i}^{\mathsf{S},(t)}$} denote the delay-sensitivity level, energy-sensitivity level, monetary-cost sensitivity level, and T4E requirement level, respectively, while {\small$c_{i}^{\mathsf{conf},(t)}\in[0,1]$} is the parsing confidence score. These quantities are transformed into optimization-compatible contract parameters via interpretable mapping rules, which are described in the following and incorporate task attributes and system-side context.

A key component of STEPS is the semantic contract, which serves as the executable interface between natural language service requirements and optimization-driven edge scheduling. Specifically, based on the semantic feature vector {\small$\mathbf{y}_{i}^{(t)}$}, task attributes {\small$q_{i}^{(t)}$}, and system-side context {\small$\xi^{(t)}$}, the ESP generates a semantic contract for UD {\small$u_i$} at timeslot {\small$t$} as {\small$\mathcal{C}_{i}^{(t)}=\mathcal{F}_{\mathsf{sc}}(\mathbf{y}_{i}^{(t)},q_{i}^{(t)};\xi^{(t)})=(\mathbf{w}_{i}^{(t)},\mathbf{g}_{i}^{(t)},\rho_{i}^{(t)}) $}, where {\small$\mathcal{F}_{\mathsf{sc}}(\cdot)$} denotes the semantic contract generation function\footnote{The semicolon in {\scriptsize$\mathcal{F}_{\mathsf{sc}}(\mathbf{y}_{i}^{(t)},q_{i}^{(t)};\xi^{(t)})$} separates the UD-specific inputs from the system-side conditioning context. Specifically, {\small$\mathbf{y}_{i}^{(t)}$} and {\small$q_i^{(t)}$} describe the semantic and task-specific information of UD {\small$u_i$}, whereas {\small$\xi^{(t)}$} represents the system-side context that conditions the semantic contract generation.}. The semantic contract {\small$\mathcal{C}_{i}^{(t)}$} consists of three components: a preference vector {\small$\mathbf{w}_{i}^{(t)}$}, a fulfillment-bound vector {\small$\mathbf{g}_{i}^{(t)}$}, and a semantic-uncertainty measure {\small$\rho_{i}^{(t)}$}\footnote{Notably, the LLM-assisted parser is responsible only for extracting semantic levels and confidence scores, whereas all continuous contract parameters are generated through interpretable mapping rules that incorporate task attributes and system-side context. This design preserves interpretability and avoids relying on the LLM to directly generate (potentially infeasible) resource-allocation or scheduling decisions.}. 
In particular, the preference vector {\small$\mathbf{w}_{i}^{(t)}=[w_{i}^{\mathsf{D},(t)},w_{i}^{\mathsf{E},(t)},w_{i}^{\mathsf{M},(t)},w_{i}^{\mathsf{S},(t)}]$} captures what the user prioritizes during service provisioning, where the four elements represent the relative importance of delay, energy consumption, monetary cost, and T4E, respectively. Let {\small$\hat{\ell}_{i}^{\mathsf{k},(t)}\in[0,1]$} denote the normalized semantic level of dimension {\small$k\in\{\mathsf{D},\mathsf{E},\mathsf{M},\mathsf{S}\}$}. The preference weights are generated according to
\begin{equation}
	w_{i}^{\mathsf{k},(t)}
	=
	\frac{\exp(\theta_w \hat{\ell}_{i}^{\mathsf{k},(t)})}
	{\sum_{\mathsf{r}\in\{\mathsf{D},\mathsf{E},\mathsf{M},\mathsf{S}\}}\exp(\theta_w \hat{\ell}_{i}^{\mathsf{r},(t)})},
	\quad \mathsf{k}\in\{\mathsf{D},\mathsf{E},\mathsf{M},\mathsf{S}\}.
	\label{eq:weight_mapping}
\end{equation}
The parameter {\small$\theta_w\ge0$} controls the degree of preference differentiation. By construction, the weights satisfy {\small$\sum_{k\in\{\mathsf{D},\mathsf{E},\mathsf{M},\mathsf{S}\}}w_{i}^{\mathsf{k},(t)}=1$ and $w_{i}^{\mathsf{k},(t)}\ge0$}.
Also, fulfillment-bound vector {\small$\mathbf{g}_{i}^{(t)}=[D_{i}^{\mathsf{max},(t)},E_{i}^{\mathsf{max},(t)},M_{i}^{\mathsf{max},(t)},S_{i}^{\mathsf{min},(t)}]$} specifies the service requirements that must be satisfied during execution. Here, {\small$D_{i}^{\mathsf{max},(t)}$}, {\small $E_{i}^{\mathsf{max},(t)}$}, {\small$M_{i}^{\mathsf{max},(t)}$}, and {\small$S_{i}^{\mathsf{min},(t)}$} denote the maximum tolerable delay, energy consumption, monetary cost, and the minimum required T4E level, respectively. To generate these bounds, the ESP first computes lightweight reference service indicators under the current system context, denoted by {\small$\bar{D}_{i}^{(t)}$}, {\small$\bar{E}_{i}^{(t)}$}, {\small$\bar{M}_{i}^{(t)}$}, and {\small$\bar{S}_{i}^{(t)}$}. These reference indicators represent pre-scheduling estimates of the service metrics defined in Sec.~\ref{sec:edge_execution_model}, and can be obtained using a default policy {\small$\pi^{\mathsf{ref}}$}, such as local execution, nearest-edge execution, or historical average execution under context {\small$\xi^{(t)}$}.\footnote{Since these quantities are computed before scheduling, they do not introduce circular dependence.} Using them, the fulfillment bounds are defined as
\begin{equation}
	\begin{aligned}
		\hspace{-2mm}&D_{i}^{\mathsf{max},(t)}=\bar{D}_{i}^{(t)}\delta_{\mathsf{D}}(\hat{\ell}_{i}^{\mathsf{D},(t)}),~~
		E_{i}^{\mathsf{max},(t)}=\bar{E}_{i}^{(t)}\delta_{\mathsf{E}}(\hat{\ell}_{i}^{\mathsf{E},(t)}),\\
		\hspace{-2mm}&M_{i}^{\mathsf{max},(t)}=\bar{M}_{i}^{(t)}\delta_{\mathsf{M}}(\hat{\ell}_{i}^{\mathsf{M},(t)}),~~
		S_{i}^{\mathsf{min},(t)}=[S_0+\delta_{\mathsf{S}}\hat{\ell}_{i}^{\mathsf{S},(t)}]_0^1.
	\end{aligned}
	\label{eq:bound_mapping}
\end{equation}
Here, {\small$[x]_0^1=\min\{\max\{x,0\},1\}$} denotes the projection of {\small$x$} onto {\small$[0,1]$}. The functions {\small$\delta_{\mathsf{D}}(\cdot)$}, {\small$\delta_{\mathsf{E}}(\cdot)$}, and {\small$\delta_{\mathsf{M}}(\cdot)$} are non-increasing mapping functions that translate semantic service requirements into quantitative fulfillment requirements. For example, a higher delay-sensitivity level leads to a smaller {\small$\delta_{\mathsf{D}}(\hat{\ell}_{i}^{\mathsf{D},(t)})$} and hence a tighter delay bound\footnote{In experiments, these mapping functions can be implemented by fixed lookup tables corresponding to low, medium, and high semantic levels.}. Finally, the semantic uncertainty {\small$\rho_{i}^{(t)}\in[0,1]$} quantifies the reliability of the semantic interpretation and its compatibility with the current system context, which we define it as 
% Specifically, it jointly captures the uncertainty arising from semantic parsing and the inconsistency between the inferred service requirements and the available system resources. We define it as
\begin{equation}
	\rho_{i}^{(t)}
	=
	\left[
	\alpha_c(1-c_{i}^{\mathsf{conf},(t)})
	+
	\alpha_x r_{i}^{\mathsf{ctx},(t)}
	\right]_0^1,
	\label{eq:semantic_uncertainty}
\end{equation}
where {\small$\alpha_c,\alpha_x\ge0$} are weighting coefficients, and {\small$r_{i}^{\mathsf{ctx},(t)}\in[0,1]$} denotes the context-inconsistency risk. A larger value of {\small$\rho_{i}^{(t)}$} indicates that the request is either more ambiguous or less compatible with the current system state.
To quantify context inconsistency, we define
\begin{equation}
	r_{i}^{\mathsf{ctx},(t)}
\hspace{-.5mm}	=\hspace{-.5mm}
	\left[
	\omega_{\mathsf{D}}\nu_{i}^{\mathsf{D},(t)}
	+
	\omega_{\mathsf{E}}\nu_{i}^{\mathsf{E},(t)}
	+
	\omega_{\mathsf{M}}\nu_{i}^{\mathsf{M},(t)}
	+
	\omega_{\mathsf{S}}\nu_{i}^{\mathsf{S},(t)}
	\right]_0^1,
	\label{eq:context_risk}
\end{equation}
where {\small$\omega_{\mathsf{D}},\omega_{\mathsf{E}},\omega_{\mathsf{M}},\omega_{\mathsf{S}}\ge0$} are weighting coefficients, and the mismatch/inconsistency terms are given by
\begin{equation}
	\begin{aligned}
		\nu_{i}^{\mathsf{D},(t)}&=\left[\frac{\bar{D}_{i}^{(t)}-D_{i}^{\mathsf{max},(t)}}{D_{i}^{\mathsf{max},(t)}}\right]_+,&
		\nu_{i}^{\mathsf{E},(t)}&=\left[\frac{\bar{E}_{i}^{(t)}-E_{i}^{\mathsf{max},(t)}}{E_{i}^{\mathsf{max},(t)}}\right]_+,\\
		\nu_{i}^{\mathsf{M},(t)}&=\left[\frac{\bar{M}_{i}^{(t)}-M_{i}^{\mathsf{max},(t)}}{M_{i}^{\mathsf{max},(t)}}\right]_+,&
		\nu_{i}^{\mathsf{S},(t)}&=\left[S_{i}^{\mathsf{min},(t)}-\bar{S}_{i}^{(t)}\right]_+,
	\end{aligned}
	\label{eq:mismatch_terms}
\end{equation}
where {\small$[x]_+=\max\{x,0\}$}. Intuitively, {\small$r_{i}^{\mathsf{ctx},(t)}$} becomes large when the semantic requirements implied by SED are substantially more stringent than what can be reasonably supported under the current network and edge-resource conditions.
Because natural language requests are inherently ambiguous and context-dependent, the semantic contracts are conservatively \textit{calibrated} before scheduling. Let {\small$\beta^{(t)}\ge0$} denote the conservativeness factor at timeslot {\small$t$}, where larger values correspond to more conservative scheduling decisions. 
For each UD {\small$u_i$}, we define the \textit{calibrated fulfillment bounds} as
\begin{equation}
\hspace{-2mm}	\begin{aligned}
		\widetilde{D}_{i}^{\mathsf{max},(t)}&=\frac{D_{i}^{\mathsf{max},(t)}}{1+\beta^{(t)}\rho_{i}^{(t)}},~
		\widetilde{E}_{i}^{\mathsf{max},(t)}=\frac{E_{i}^{\mathsf{max},(t)}}{1+\beta^{(t)}\rho_{i}^{(t)}},\\
		\widetilde{M}_{i}^{\mathsf{max},(t)}&=\frac{M_{i}^{\mathsf{max},(t)}}{1+\beta^{(t)}\rho_{i}^{(t)}},~
		\widetilde{S}_{i}^{\mathsf{min},(t)}\hspace{-1mm}=[S_{i}^{\mathsf{min},(t)}+\beta^{(t)}\rho_{i}^{(t)}]_0^1,
	\end{aligned}
	\label{eq:calibrated_bounds}
\end{equation}
where {\small$\widetilde{D}_{i}^{\mathsf{max},(t)}$}, {\small$\widetilde{E}_{i}^{\mathsf{max},(t)}$}, {\small$\widetilde{M}_{i}^{\mathsf{max},(t)}$}, and {\small$\widetilde{S}_{i}^{\mathsf{min},(t)}$} denote the scheduling-stage delay, energy, monetary-cost, and T4E requirements, respectively. As the semantic uncertainty {\small$\rho_{i}^{(t)}$} increases, the scheduling process adopts stricter delay, energy, and cost limits while imposing a higher T4E requirement, thereby reserving a larger safety margin against uncertain semantic interpretations. Importantly, the above calibrated bounds are used only during scheduling and do not alter the users’ original service requirements (i.e., contract fulfillment is evaluated using the original fulfillment-bound vector {\small$ \mathbf{g}_i^{(t)}$}).
We further introduce a semantic admission threshold {\small$\tau^{\rho,(t)}\in[0,1]$}, which specifies the maximum acceptable semantic uncertainty at timeslot {\small$t$}, and define the set of \textit{schedulable} UDs as
\begin{equation}
	\mathcal{U}^{\mathsf{s},(t)}
	=
	\left\{
	u_i\in\mathcal{U}^{(t)}
	~\middle|~
	\rho_{i}^{(t)}\le\tau^{\rho,(t)}
	\right\},
	\label{eq:semantic_admission}
\end{equation}
where only the UDs in {\small$\mathcal{U}^{\mathsf{s},(t)}$} participate in the subsequent contract-guided scheduling process.

\vspace{-2.5mm}
\subsection{Edge Execution and Contract-Guided Service Loss}
\label{sec:edge_execution_model}
\label{sec:service_loss}
For each admitted UD {\small$u_i\in\mathcal{U}^{\mathsf{s},(t)}$}, we define a binary EN selection variable {\small$x_{i,j}^{(t)}\in\{0,1\}$}, where {\small$x_{i,j}^{(t)}=1$} indicates that UD {\small$u_i$}'s task is executed at EN {\small$e_j\in\mathcal{E}_0$}. Since each task can be executed at only one EN, the assignment variables satisfy
\begin{equation}
	\sum_{e_j\in\mathcal{E}_0}x_{i,j}^{(t)}=1,
	\quad \forall u_i\in\mathcal{U}^{\mathsf{s},(t)}.
	\label{eq:unique_assign}
\end{equation}
For each ES {\small$e_j\in\mathcal{E}$}, let {\small$f_{i,j}^{(t)}\ge0$} and {\small$b_{i,j}^{(t)}\ge0$} denote the computing and bandwidth resources allocated to UD {\small$u_i$}, respectively. We note that the aggregate resource consumption at each ES must not exceed its available capacities, yielding
\begin{equation}
	\hspace{-2mm}\sum_{u_i\in\mathcal{U}^{\mathsf{s},(t)}}\hspace{-1mm}x_{i,j}^{(t)}f_{i,j}^{(t)}\le F_{j}^{(t)},
	\sum_{u_i\in\mathcal{U}^{\mathsf{s},(t)}}\hspace{-.5mm}x_{i,j}^{(t)}b_{i,j}^{(t)}\le B_{j}^{(t)},
	~\forall e_j\in\mathcal{E}.
	\label{eq:edge_capacity}
\end{equation}
Also, to couple EN selection and resource allocation, we impose the following resource-feasibility constraints
\begin{equation}
	\begin{aligned}
		x_{i,j}^{(t)}f_{\mathsf{min}}&\le f_{i,j}^{(t)}\le x_{i,j}^{(t)}F_{j}^{(t)},~~
		x_{i,j}^{(t)}b_{\mathsf{min}}\le b_{i,j}^{(t)}\le x_{i,j}^{(t)}B_{j}^{(t)},\\
		x_{i,0}^{(t)}f_{\mathsf{min}}^{\mathsf{loc}}&\le f_{i,0}^{(t)}\le x_{i,0}^{(t)}F_{i}^{\mathsf{loc},(t)},~~
		b_{i,0}^{(t)}=0,
	\end{aligned}
	\label{eq:resource_coupling}
\end{equation}
where {\small$f_{\mathsf{min}}$}, {\small$b_{\mathsf{min}}$}, and {\small$f_{\mathsf{min}}^{\mathsf{loc}}$} denote the minimum positive computing, bandwidth, and local-computing allocation levels required when the corresponding execution mode is selected. These constraints ensure that resource variables are activated only for the selected EN. Specifically, when {\small$x_{i,j}^{(t)}=0$}, both the lower and upper bounds force the corresponding resource allocations to zero; when {\small$x_{i,j}^{(t)}=1$}, the allocated resources must be positive and bounded by the available EN-side capacities.

For edge execution, the uplink transmission rate between UD {\small$u_i$} and ES {\small$e_j$} during timeslot {\small$t$} is given by
\begin{equation}
	R_{i,j}^{(t)}
	=
	x_{i,j}^{(t)}b_{i,j}^{(t)}
	\log_2
	\left(
	1+
	\frac{p_{i}^{(t)}h_{i,j}^{(t)}}
	{N_0 b_{i,j}^{(t)}}
	\right),
	\label{eq:rate}
\end{equation}
where {\small$p_{i}^{(t)}$} denotes the transmit power of UD {\small$u_i$}, {\small$h_{i,j}^{(t)}$} is the channel gain between UD {\small$u_i$} and ES {\small$e_j$}, and {\small$N_0$} is the noise power spectral density. Based on the selected EN and allocated resources, the total service delay is defined as
\begin{equation}
	\hspace{-3mm}\begin{aligned}
		D_{i}^{(t)}
		=&~
		x_{i,0}^{(t)}\frac{c_i^{(t)}}{f_{i,0}^{(t)}+\epsilon}
		\hspace{-.5mm}+\hspace{-.5mm}
		\sum_{e_j\in\mathcal{E}}x_{i,j}^{(t)}
		\left(
		\frac{d_i^{(t)}}{R_{i,j}^{(t)}+\epsilon}
	\hspace{-.5mm}	+\hspace{-.5mm}
		\frac{c_i^{(t)}}{f_{i,j}^{(t)}+\epsilon}
		\hspace{-.5mm}+\hspace{-.5mm}
		Q_j^{(t)}
		\right),
	\end{aligned}
	\label{eq:delay}
\end{equation}
where {\small$\epsilon>0$} is a small constant introduced for numerical stability, and {\small$Q_j^{(t)}$} is the pre-decision queueing delay included in the system-side context {\small$\xi^{(t)}$}.
For local execution, the energy consumption is modeled as {\small$\kappa_u c_i^{(t)}(f_{i,0}^{(t)})^2$}, where {\small$\kappa_u$} is the effective switched-capacitance of UD {\small$u_i$}'s device~\cite{li2024optimal}. For edge execution, we assume that the dominant energy expenditure arises from uplink transmission as ESs are often not power-limited. Accordingly, the total energy consumption is
\begin{equation}
	E_{i}^{(t)}
	=
	x_{i,0}^{(t)}\kappa_u c_i^{(t)}(f_{i,0}^{(t)})^2
	+
	\sum_{e_j\in\mathcal{E}}x_{i,j}^{(t)}
	\frac{p_i^{(t)}d_i^{(t)}}{R_{i,j}^{(t)}+\epsilon}.
	\label{eq:energy}
\end{equation}
Let {\small$\lambda_{j}^{\mathsf{f},(t)}$} and {\small$\lambda_{j}^{\mathsf{b},(t)}$} denote the unit prices of computing and bandwidth resources at ES {\small$e_j$}, respectively, which are treated as system-side scheduling parameters at timeslot {\small$t$}. The monetary cost incurred by UD {\small$u_i$} is given by
\begin{equation}
	M_{i}^{(t)}
	=
	\sum_{e_j\in\mathcal{E}}
	x_{i,j}^{(t)}
	\left(
	\lambda_{j}^{\mathsf{f},(t)}f_{i,j}^{(t)}
	+
	\lambda_{j}^{\mathsf{b},(t)}b_{i,j}^{(t)}
	\right).
	\label{eq:monetary_cost}
\end{equation}
Finally, the achieved T4E level of UD {\small$u_i$} is defined as {\small$ S_{i}^{(t)} =x_{i,0}^{(t)}\sigma_{i,0}^{\mathsf{loc},(t)}+\sum_{e_j\in\mathcal{E}}x_{i,j}^{(t)}\sigma_{j}^{\mathsf{edge},(t)} $}, where {\small$\sigma_j^{\mathsf{edge},(t)}$} denotes the T4E score maintained by the ESP based on historical service-fulfillment records, EN availability, and audit outcomes. In summary, {\small$D_i^{(t)}$}, {\small$E_i^{(t)}$}, {\small$M_i^{(t)}$}, and {\small$S_i^{(t)}$} characterize the realized execution outcomes of a service request and serve as the basis for subsequent semantic contract fulfillment evaluation.

The semantic contract specifies both preferences and fulfillment requirements. To accommodate possible violations of the calibrated fulfillment bounds during scheduling, we introduce the nonnegative slack variables {\small$z_{i}^{\mathsf{D},(t)}$}, {\small$z_{i}^{\mathsf{E},(t)}$}, {\small$z_{i}^{\mathsf{M},(t)}$}, and {\small$z_{i}^{\mathsf{S},(t)}$} corresponding to delay, energy consumption, monetary cost, and T4E, respectively. We impose the following conditions
\begin{equation}{\small
	\begin{aligned}
		D_{i}^{(t)}&\le\widetilde{D}_{i}^{\mathsf{max},(t)}(1+z_{i}^{\mathsf{D},(t)}),&
		E_{i}^{(t)}\le\widetilde{E}_{i}^{\mathsf{max},(t)}(1&+z_{i}^{\mathsf{E},(t)}),\\
		M_{i}^{(t)}&\le\widetilde{M}_{i}^{\mathsf{max},(t)}(1+z_{i}^{\mathsf{M},(t)}),&
		\widetilde{S}_{i}^{\mathsf{min},(t)}-S_{i}^{(t)}&\le z_{i}^{\mathsf{S},(t)}.
	\end{aligned}}
	\label{eq:slack_constraints}
\end{equation}
In essence, the slack variables quantify the extent to which the realized service outcomes deviate from the calibrated contract requirements. Based on these quantities, we define the \textit{semantic contract violation degree} as {\small$\Omega_{i}^{\mathsf{sc},(t)}=\zeta_{\mathsf{D}}z_{i}^{\mathsf{D},(t)}+\zeta_{\mathsf{E}}z_{i}^{\mathsf{E},(t)}+\zeta_{\mathsf{M}}z_{i}^{\mathsf{M},(t)}+\zeta_{\mathsf{S}}z_{i}^{\mathsf{S},(t)}$}, where {\small$\zeta_{\mathsf{D}}$, $\zeta_{\mathsf{E}}$}, {\small$\zeta_{\mathsf{M}}$}, and {\small$\zeta_{\mathsf{S}}$} are nonnegative violation-penalty coefficients.
Based on the calibrated semantic contract, we define \textit{contract-guided service loss} of UD {\small$u_i$} at timeslot {\small$t$} as
\begin{equation}
	\begin{aligned}
		J_{i}^{(t)}
		=&~
		w_{i}^{\mathsf{D},(t)}
		\frac{D_{i}^{(t)}}{\widetilde{D}_{i}^{\mathsf{max},(t)}}
		+
		w_{i}^{\mathsf{E},(t)}
		\frac{E_{i}^{(t)}}{\widetilde{E}_{i}^{\mathsf{max},(t)}}
		\\
		&+
		w_{i}^{\mathsf{M},(t)}
		\frac{M_{i}^{(t)}}{\widetilde{M}_{i}^{\mathsf{max},(t)}}
		+
		w_{i}^{\mathsf{S},(t)}
		(1-S_{i}^{(t)})
		+
		\eta\Omega_{i}^{\mathsf{sc},(t)},
	\end{aligned}
	\label{eq:J_loss}
\end{equation}
where {\small$\eta\ge0$} controls the importance of contract violations. In \eqref{eq:J_loss}, the first three terms capture the normalized delay, energy, and monetary-cost performance relative to the calibrated fulfillment bounds, while the fourth term quantifies the loss associated with insufficient T4E. Also, the final term explicitly penalizes violations of the calibrated semantic contract.

\vspace{-2.5mm}
\subsection{Problem Formulation}
\label{sec:problem_formulation}

Let {\small$\mathbf{z}_{i}^{(t)}=[z_{i}^{\mathsf{D},(t)},z_{i}^{\mathsf{E},(t)},z_{i}^{\mathsf{M},(t)},z_{i}^{\mathsf{S},(t)}]^{\top}$} denote the semantic contract violation vector of UD {\small$u_i$} at timeslot {\small$t$}. Based on the semantic contract framework developed above, we formulate the long-term contract-guided scheduling as problem {\small$\mathcal{P}$}:
\begin{equation}
\small
	\begin{aligned}
		(\mathcal{P}):~\quad
		\min_{\{\mathbf{x}^{(t)},\mathbf{f}^{(t)},\mathbf{b}^{(t)},\mathbf{z}^{(t)}\}_{t\in\mathcal{T}}}
		~
		\frac{1}{T}
		\sum_{t=1}^{T}
		\sum_{u_i\in\mathcal{U}^{\mathsf{s},(t)}}
		J_{i}^{(t)}
	\end{aligned}
	\label{eq:P0}
    \vspace{-3mm}
\end{equation}
\noindent\textbf{s.t.}

\noindent\textbf{\textit{\underline{Constraints}:}}
\begin{itemize}
	\item \textit{Task assignment:} \eqref{eq:unique_assign}
	\item \textit{ES resource capacity:} \eqref{eq:edge_capacity}
	\item \textit{EN selection and resource-allocation coupling:} \eqref{eq:resource_coupling}
	\item \textit{Calibrated semantic-contract fulfillment:} \eqref{eq:slack_constraints}
\end{itemize}

\noindent\textbf{\textit{\underline{Variables}:}}
\begin{itemize}
	\item \textit{EN selection variables:}\\
	$\{x_{i,j}^{(t)}\in\{0,1\}\mid u_i\in\mathcal{U}^{\mathsf{s},(t)}, e_j\in\mathcal{E}_0, t\in\mathcal{T}\}$
	\item \textit{Computing and bandwidth allocation variables:}\\
	$\{\mathbf{f}^{(t)},\mathbf{b}^{(t)}\}_{t\in\mathcal{T}}$ satisfying the resource-domain and coupling constraints in~\eqref{eq:resource_coupling}
	\item \textit{Semantic contract violation variables:}\\
	$\{\mathbf{z}_{i}^{(t)}\succeq\mathbf{0}\mid u_i\in\mathcal{U}^{\mathsf{s},(t)}, t\in\mathcal{T}\}$.\footnote{$\mathbf{a}\succeq\mathbf{b}$ denotes component-wise inequality for vectors of the same dimension; hence, $\mathbf{z}_{i}^{(t)}\succeq\mathbf{0}$ means that all entries of $\mathbf{z}_{i}^{(t)}$ are nonnegative.}
\end{itemize}

\noindent In {\small$\mathcal{P}$}, the objective minimizes the long-term average contract-guided service loss across all admitted UDs. The listed constraints enforce task assignment, ES-side capacity feasibility, EN selection/resource-allocation coupling, and calibrated semantic-contract fulfillment. The listed variables correspond to EN selection, communication-computation resource allocation, and semantic contract violation degrees. Problem {\small$\mathcal{P}$} is a long-term mixed-integer nonlinear program (MINLP) that jointly addresses task assignment, communication-resource allocation, and computation-resource provisioning. The combinatorial nature of EN selection, together with the nonlinear transmission-rate and delay expressions, renders its direct solution computationally prohibitive in dynamic E4NetAI environments. This motivates the development of STEPS, transforming the original long-term optimization into a distributed and adaptive scheduling framework as discussed next.

\vspace{-1.5mm}
\section{Design Methodology of STEPS}
\label{sec:steps}

\noindent In this section, we introduce STEPS, an online distributed framework for semantic contract-guided edge scheduling. 

\vspace{-2.5mm}
\subsection{Per-Slot Penalized Surrogate and Potential Game}
\label{sec:potential_game}

Directly solving {\small$\mathcal{P}$} over the entire time horizon requires knowledge of future UDs' requests, wireless channels, edge workloads, and resource availability, as well as centralized coordination across all UDs and ESs. Such knowledge is often unavailable in dynamic E4NetAI environments, where service demands and resource states evolve over time~\cite{li2024optimal,zhang2025beyond}. Accordingly, STEPS adopts a slot-wise online decision structure based only on the currently observed system state. Under this online formulation, the service loss experienced by a UD is primarily determined by its own EN selection and resource allocation decision. However, because multiple UDs may compete for the limited resources of the same ESs, their decisions remain coupled through shared computing and bandwidth capacities. To enable distributed scheduling, STEPS replaces these global capacity couplings with congestion-dependent penalty terms that reflect the impact of aggregate resource usage at each ES. Consequently, the original coupled scheduling problem is transformed into a penalized surrogate formulation in which the effects of resource contention can be locally observed and incorporated into individual decision making. This construction follows the general principles of congestion games and resource-pricing mechanisms~\cite{rosenthal1973class,kelly1998rate}.

To further facilitate distributed optimization, the available computing and bandwidth resources are discretized into finite \textit{resource packages}~\cite{zhang2025beyond}. At each timeslot, every admitted UD {\small$u_i\in\mathcal{U}^{\mathsf{s},(t)}$} selects an intended EN with a corresponding resource-package combination according to its semantic contract and the current system state. The ESP then performs feasibility verification and allocation realization based on the actual resource capacities and possible resource conflicts. To formalize this process, we next define the feasible action space of each UD and show that the  per-timeslot scheduling problem can be modeled as an exact potential game.
\vspace{-1.5mm}
\begin{Defn}(Action Space of UDs)
	For each admitted UD {\small$u_i\in\mathcal{U}^{\mathsf{s},(t)}$}, its action is denoted by {\small$a_i^{(t)}=	(e_i^{(t)},r_i^{\mathsf{f},(t)}, r_i^{\mathsf{b},(t)})$}, where {\small$e_i^{(t)}\in\mathcal{E}_0$}, {\small$r_i^{\mathsf{f},(t)}$}, and  {\small$r_i^{\mathsf{b},(t)}$} capture
     the selected EN,  computing-resource package, and bandwidth-resource package, respectively. Since local execution does not require uplink bandwidth, the feasible action space of UD {\small$u_i$} is defined as
	\begin{equation}
		\mathcal{A}_i^{(t)}
		=
		\mathcal{A}_{i}^{\mathsf{loc},(t)}
		\cup
		\mathcal{A}_{i}^{\mathsf{edge},(t)},
		\label{eq:action_space_union}
	\end{equation}
	where {\small$\mathcal{A}_{i}^{\mathsf{loc},(t)}	=	\{(e_0,r^{\mathsf{f}},0)\mid r^{\mathsf{f}}\in\mathcal{R}_{i}^{\mathsf{loc},\mathsf{f},(t)}\}, $} and {\small$ \mathcal{A}_{i}^{\mathsf{edge},(t)}
	=
	\{(e_j,r^{\mathsf{f}},r^{\mathsf{b}})
	\mid e_j\in\mathcal{E},~
	r^{\mathsf{f}}\in\mathcal{R}_{i,j}^{\mathsf{f},(t)},~
	r^{\mathsf{b}}\in\mathcal{R}_{i,j}^{\mathsf{b},(t)}\}.$}
	Here, {\small$\mathcal{R}_{i}^{\mathsf{loc},\mathsf{f},(t)}$}, {\small$\mathcal{R}_{i,j}^{\mathsf{f},(t)}$} and {\small$\mathcal{R}_{i,j}^{\mathsf{b},(t)}$} denote finite sets of feasible computing and bandwidth resource packages. Consequently, each action jointly specifies an execution location together with an associated resource allocation configuration, and can be mapped directly to the corresponding EN selection variable {\small$x_{i,j}^{(t)}$} and the selected resource-package values.
\end{Defn}

For notational clarity, for any action $a_i^{(t)}=(e_i^{(t)},r_i^{\mathsf f,(t)},r_i^{\mathsf b,(t)})$,
let $e(a_i^{(t)})$ denote its selected execution node and $r_i^{\mathsf z,(t)}(a_i^{(t)})$ denote its selected resource package of type $\mathsf z\in\{\mathsf f,\mathsf b\}$. If $e(a_i^{(t)})=e_j\in\mathcal E$, we define $j(a_i^{(t)})=j$ as the index of the selected ES. Given an action profile {\small$\mathbf{a}^{(t)}=[a_i^{(t)}]_{u_i\in\mathcal{U}^{\mathsf{s},(t)}}$},
the aggregate computing and bandwidth loads induced at ES {\small$e_j$} are
\begin{equation}
	\begin{aligned}
		L_j^{\mathsf{z},(t)}(\mathbf{a}^{(t)})
		&=
		\sum_{u_i\in\mathcal{U}^{\mathsf{s},(t)}}
		\mathbb{I}_{\{e(a_i^{(t)})=e_j\}}
		r_i^{\mathsf{z},(t)}(a_i^{(t)}),
		\quad
		\mathsf z\in\{\mathsf f,\mathsf b\}.
	\end{aligned}
	\label{eq:intended_compute_load}
\end{equation}
where {\small$\mathbb{I}_{\{\cdot\}}$} is the indicator function.
For each candidate action, STEPS evaluates the resulting service outcome relative to the calibrated semantic contract. To this end, we define the action-dependent fulfillment-shortfall terms as
\begin{equation}
	\hspace{-3mm}	\begin{aligned}
			&z_i^{\mathsf{D},(t)}(a_i)
			\hspace{-0.5mm}=\hspace{-0.5mm}
			\left[
			\frac{D_i^{(t)}(a_i)}
			{\widetilde{D}_i^{\mathsf{max},(t)}}
			-1
			\right]_+,
			z_i^{\mathsf{E},(t)}(a_i)
			\hspace{-0.5mm}=\hspace{-0.5mm}
			\left[
			\frac{E_i^{(t)}(a_i)}
			{\widetilde{E}_i^{\mathsf{max},(t)}}
			-1
			\right]_+,
			\\&
			z_i^{\mathsf{M},(t)}(a_i)
			\hspace{-0.5mm}=\hspace{-0.5mm}
			\left[
			\frac{M_i^{(t)}(a_i)}
			{\widetilde{M}_i^{\mathsf{max},(t)}}
			-1
			\right]_+,
					z_i^{\mathsf{S},(t)}(a_i)
			\hspace{-0.5mm}=\hspace{-0.5mm}
			\left[
			\widetilde{S}_i^{\mathsf{min},(t)}
			\hspace{-0.5mm}-\hspace{-0.5mm}
			S_i^{(t)}(a_i)
			\right]_+.
	\end{aligned}
	\label{eq:action_slack}
\end{equation}
These quantities measure the extent to which a candidate action violates the calibrated contract requirements. Accordingly, the \textit{action-dependent semantic contract violation degree} is defined as {\small$\Omega_i^{\mathsf{sc},(t)}(a_i)=\zeta_{\mathsf{D}} z_i^{\mathsf{D},(t)}(a_i)+\zeta_{\mathsf{E}} z_i^{\mathsf{E},(t)}(a_i)+\zeta_{\mathsf{M}} z_i^{\mathsf{M},(t)}(a_i)+\zeta_{\mathsf{S}} z_i^{\mathsf{S},(t)}(a_i). $}
Unlike the optimization in Sec.~\ref{sec:problem_formulation}, this formulation evaluates contract violations directly from candidate actions, and thus avoids introducing additional continuous slack variables into the finite game. Subsequently, we define the \textit{contract-guided service loss} of UD {\small$u_i$} as
\begin{equation}
		\begin{aligned}
			&H_i^{(t)}(a_i^{(t)})
			=~
			w_i^{\mathsf{D},(t)}
			\frac{D_i^{(t)}(a_i^{(t)})}
			{\widetilde{D}_i^{\mathsf{max},(t)}}
			+
			w_i^{\mathsf{E},(t)}
			\frac{E_i^{(t)}(a_i^{(t)})}
			{\widetilde{E}_i^{\mathsf{max},(t)}}
			\\
			&+
			w_i^{\mathsf{M},(t)}
			\frac{M_i^{(t)}(a_i^{(t)})}
			{\widetilde{M}_i^{\mathsf{max},(t)}}
			\hspace{-.5mm}+\hspace{-.5mm}
			w_i^{\mathsf{S},(t)}
			\big(1-S_i^{(t)}(a_i^{(t)})\big)
			\hspace{-.5mm}+\hspace{-.5mm}
			\eta
			\Omega_i^{\mathsf{sc},(t)}(a_i^{(t)}),
	\end{aligned}
	\label{eq:H_loss}
\end{equation}
where the first four terms evaluate delay, energy consumption, monetary cost, and T4E performance relative to the calibrated semantic contract, while the final term penalizes explicit contract violations. Therefore, {\small$H_i^{(t)}(\cdot)$} given by \eqref{eq:H_loss} captures the semantic objective of each individual UD.

We then note that while {\small$H_i^{(t)}(\cdot)$} reflects user-side objectives, it does not account for resource contention among UDs sharing the same ES. Subsequently, to discourage excessive concentration of requests on ESs, we introduce congestion penalties based on aggregate resource utilization. Unlike hard capacity constraints, these penalties depend only on locally observable aggregate loads and can therefore be incorporated into our later-developed distributed best-response updates. Specifically, for {\small$e_j$}, the computing and bandwidth congestion (collectively capturing the resource congestion) penalties are defined as
\begin{equation}
	\hspace{-0mm}
		\begin{aligned}
			\Gamma_j^{\mathsf{f},(t)}(L)
			\hspace{-.5mm}=\hspace{-.5mm}
			\frac{\gamma_{\mathsf{f}}}{2}
			\left(
			\left[
			\frac{L}{F_j^{(t)}}
			-
			\varrho_{\mathsf{f}}
			\right]_+
			\right)^2,~
			\Gamma_j^{\mathsf{b},(t)}(L)
			\hspace{-.5mm}=\hspace{-.5mm}
			\frac{\gamma_{\mathsf{b}}}{2}
			\left(
			\left[
			\frac{L}{B_j^{(t)}}
			-
			\varrho_{\mathsf{b}}
			\right]_+
			\right)^2,
	\end{aligned}
	\label{eq:congestion_penalty}
\end{equation}
where {\small$\gamma_{\mathsf{f}}$} and {\small$\gamma_{\mathsf{b}}$} are nonnegative penalty coefficients, and {\small$\varrho_{\mathsf{f}},\varrho_{\mathsf{b}}\in(0,1]$} are target utilization thresholds. In \eqref{eq:congestion_penalty}, when {\small$\varrho_{\mathsf{f}}=\varrho_{\mathsf{b}}=1$}, penalties are activated only after capacity limits are exceeded. Also, when {\small$\varrho_{\mathsf{f}}$} and {\small$\varrho_{\mathsf{b}}$} are smaller than one, the system proactively discourages high-utilization ESs before  overloading occurs. With both the user-side service loss and ES-side congestion penalties defined above, we next formulate the per-timeslot scheduling interaction as a distributed game.

\vspace{-1.5mm}
\begin{Defn}(Contract-Guided Scheduling Game)
	At each timeslot {\small$t$}, the distributed scheduling interaction among admitted UDs is modeled as the contract-guided scheduling game
	\begin{equation}
		\mathcal{G}^{(t)}
		=
		\left(
		\mathcal{U}^{\mathsf{s},(t)},
		\{\mathcal{A}_{i}^{(t)}\}_{u_i\in\mathcal{U}^{\mathsf{s},(t)}},
		\{\mathcal{J}_{i}^{(t)}\}_{u_i\in\mathcal{U}^{\mathsf{s},(t)}}
		\right),
	\end{equation}
	where {\small$\mathcal{U}^{\mathsf{s},(t)}$} denotes the set of admitted UDs and constitutes the player set, {\small$\mathcal{A}_{i}^{(t)}$} is the finite action space of UD {\small$u_i$}, and {\small$\mathcal{J}_{i}^{(t)}$} is the scheduling cost incurred by UD {\small$u_i$}.
\end{Defn}

Given the game defined above, we combine the contract-guided service loss and the ES-side marginal congestion impact into the scheduling cost of each player. Specifically, the scheduling cost of
UD {\small$u_i$} is defined as
\begin{equation}{\small
	\begin{aligned}
		\hspace{-1.5mm}\mathcal{J}_{i}^{(t)}(a_i^{(t)},\mathbf{a}_{-i}^{(t)})\hspace{-.5mm}
		=\hspace{-.5mm}
		H_i^{(t)}(a_i^{(t)})
		\hspace{-.5mm}+\hspace{-.5mm}
		(1\hspace{-.5mm}+\hspace{-.5mm}\chi^{(t)})
		\hspace{-1mm}\sum_{\mathsf z\in\{\mathsf f,\mathsf b\}}\hspace{-1mm}
		\Delta\Gamma_i^{\mathsf z,(t)}
		(a_i^{(t)},\mathbf{a}_{-i}^{(t)}),
	\end{aligned}}
	\label{eq:user_cost}
\end{equation}
where {\small$\mathbf{a}_{-i}^{(t)}$} denotes the actions of all admitted UDs except {\small$u_i$}, and {\small$\chi^{(t)}$} is the edge coordination gain.
For each ES {\small$e_j$} and each resource type {\small$\mathsf z\in\{\mathsf f,\mathsf b\}$}, the aggregate load excluding UD {\small$u_i$} is defined as
\begin{equation}
	\begin{aligned}
		L_{j,-i}^{\mathsf z,(t)}(\mathbf a_{-i}^{(t)})
		=
		\sum_{u_k\in\mathcal U^{\mathsf s,(t)}\setminus\{u_i\}}
		\mathbb{I}_{\{e(a_k^{(t)})=e_j\}}
		r_k^{\mathsf z,(t)}(a_k^{(t)}).
	\end{aligned}
	\label{eq:load_excluding_i}
\end{equation} 
The marginal congestion cost induced by UD {\small$u_i$} under candidate action {\small$a_i^{(t)}$} is defined as the incremental ES-side congestion penalty caused by adding {\small$u_i$}'s selected resource package to the ES selected by this action. Specifically, if {\small$e(a_i^{(t)})=e_j\in\mathcal E$}, we define {\small$\Delta\Gamma_i^{\mathsf z,(t)}(a_i^{(t)},\mathbf a_{-i}^{(t)})\triangleq\Gamma_j^{\mathsf z,(t)}\!\left(L_{j,-i}^{\mathsf z,(t)}(\mathbf a_{-i}^{(t)})+r_i^{\mathsf z,(t)}(a_i^{(t)})\right)-\Gamma_j^{\mathsf z,(t)}\!\left(L_{j,-i}^{\mathsf z,(t)}(\mathbf a_{-i}^{(t)})\right),~\mathsf z\in\{\mathsf f,\mathsf b\}$}. For local execution (i.e., {\small$e(a_i^{(t)})=e_0$}), we set {\small$\Delta\Gamma_i^{\mathsf z,(t)}(a_i^{(t)},\mathbf a_{-i}^{(t)})=0$}. Here, {\small$\Gamma_{j}^{\mathsf z,(t)}(\cdot)$} is the ES-specific congestion penalty function defined in~\eqref{eq:congestion_penalty}. With the above marginal congestion construction, the \textit{potential function} associated with {\small$\mathcal{G}^{(t)}$} is defined as
\begin{equation}
		\begin{aligned}
			&\hspace{-1mm}\Phi^{(t)}(\mathbf{a}^{(t)})
			=
			\sum_{u_i\in\mathcal{U}^{\mathsf{s},(t)}}
			H_i^{(t)}(a_i^{(t)})
			\\
			&\hspace{-1mm}+
			(1+\chi^{(t)})\hspace{-1mm}
			\sum_{e_j\in\mathcal{E}}
			\left[
			\Gamma_j^{\mathsf{f},(t)}
			\left(
			L_j^{\mathsf{f},(t)}(\mathbf{a}^{(t)})
			\right)
			+
			\Gamma_j^{\mathsf{b},(t)}
			\left(
			L_j^{\mathsf{b},(t)}(\mathbf{a}^{(t)})
			\right)
			\right].
	\end{aligned}
	\label{eq:potential}
\end{equation}
The first term in {\small$\Phi^{(t)}$} aggregates the contract-guided service losses of admitted UDs, while the second term penalizes edge-side resource congestion. Accordingly, the potential game targets the following finite-action penalized surrogate problem:
\begin{equation}
	\mathcal{P}_{\mathsf{pg}}^{(t)}:\quad
	\min_{\mathbf{a}^{(t)}\in\mathcal{A}^{(t)}}
	~
	\Phi^{(t)}(\mathbf{a}^{(t)}),
	\quad
	\mathcal{A}^{(t)}
	\triangleq
	\prod_{u_i\in\mathcal{U}^{\mathsf{s},(t)}}
	\mathcal{A}_i^{(t)}.
	\label{eq:pg_surrogate_problem}
\end{equation}
Compared with the original long-term problem {\small$\mathcal{P}$}, {\small$\mathcal{P}_{\mathsf{pg}}^{(t)}$} focuses on the current timeslot, restricts each admitted UD to the feasible action space in~\eqref{eq:action_space_union}, and replaces hard ES-side capacity constraints with the congestion penalties in~\eqref{eq:congestion_penalty}. Thus, the potential game does not directly solve the long-horizon MINLP {\small$\mathcal{P}$}; instead, it provides a distributed local-improvement mechanism for the per-slot surrogate objective in~\eqref{eq:pg_surrogate_problem}.

To solve this surrogate problem in a distributed manner, STEPS adopts an \textit{asynchronous strict best-response process}. Let {\small$\mathbf{a}^{(t)}(r)$} denote the current action profile at update round {\small$r$}. During each round, the admitted UDs are visited sequentially in an asynchronous order. When UD {\small$u_i$} is visited, it observes the actions of admitted UDs, denoted by {\small$\mathbf{a}_{-i}^{(t)}$}, and computes
\begin{equation}
	a_i^{\mathsf{br},(t)}
	\in
	\arg\min_{a_i^{(t)}\in\mathcal{A}_i^{(t)}}
	\mathcal{J}_i^{(t)}(a_i^{(t)},\mathbf{a}_{-i}^{(t)}).
	\label{eq:best_response_update}
\end{equation}
If this best-response action strictly reduces the current scheduling cost, i.e.,
{\small$\mathcal{J}_i^{(t)}(a_i^{\mathsf{br},(t)},\mathbf{a}_{-i}^{(t)})
<
\mathcal{J}_i^{(t)}(a_i^{(t)},\mathbf{a}_{-i}^{(t)})$},
UD {\small$u_i$} updates its action to {\small$a_i^{\mathsf{br},(t)}$}; otherwise, it keeps its current action. After any successful update, the ES-side loads in~\eqref{eq:intended_compute_load} and the corresponding congestion penalties in~\eqref{eq:congestion_penalty} are refreshed before the next UD is visited. A full update round is called \textit{improving} if at least one admitted UD changes its action. The process stops when a complete update round produces no strict improvement, in which case every admitted UD is already a best response to the current actions of the others. This process is additionally capped by {\small$I_{\mathsf{max}}$} rounds to control online scheduling latency.

We note that the resulting action profile specifies the intended EN selections and resource-allocation requests of all admitted UDs before physical resource realization. Since these intended requests are determined through distributed best-response updates, their aggregate demand may still exceed the available resources of some ESs. Thus, to guarantee physical feasibility before execution, the ESP performs a resource-realization step on the final game outcome. Let {\small$\mathbf{a}^{\mathsf{fin},(t)}$} denote the final intended UD action profile returned by the asynchronous best-response procedure. For each ES {\small$e_j$}, we define the computing- and bandwidth-resource realization factors as
\begin{equation}
	\kappa_j^{\mathsf{f},(t)}
	=
	\begin{cases}
		\min\left\{1,\frac{F_j^{(t)}}
		{L_j^{\mathsf{f},(t)}(\mathbf{a}^{\mathsf{fin},(t)})}\right\},
		&
		L_j^{\mathsf{f},(t)}(\mathbf{a}^{\mathsf{fin},(t)})>0,
		\\
		1,&
		L_j^{\mathsf{f},(t)}(\mathbf{a}^{\mathsf{fin},(t)})=0,
	\end{cases}
	\label{eq:compute_realization_factor}
\end{equation}
and
\begin{equation}
	\kappa_j^{\mathsf{b},(t)}
	=
	\begin{cases}
		\min\left\{1,\frac{B_j^{(t)}}
		{L_j^{\mathsf{b},(t)}(\mathbf{a}^{\mathsf{fin},(t)})}\right\},
		&
		L_j^{\mathsf{b},(t)}(\mathbf{a}^{\mathsf{fin},(t)})>0,
		\\
		1,&
		L_j^{\mathsf{b},(t)}(\mathbf{a}^{\mathsf{fin},(t)})=0.
	\end{cases}
	\label{eq:bandwidth_realization_factor}
\end{equation}
These factors proportionally scale the requested resources whenever the aggregate demand exceeds the available capacity. Accordingly, the realized resource allocation is given by
\begin{equation}
	\begin{aligned}
		f_{i,j}^{(t)}
		&=
		\mathbb{I}_{\{e_i^{\mathsf{fin},(t)}=e_j\}}
		\kappa_j^{\mathsf{f},(t)}
		r_i^{\mathsf{f},\mathsf{fin},(t)},\\
		b_{i,j}^{(t)}
		&=
		\mathbb{I}_{\{e_i^{\mathsf{fin},(t)}=e_j\}}
		\kappa_j^{\mathsf{b},(t)}
		r_i^{\mathsf{b},\mathsf{fin},(t)},
		\quad e_j\in\mathcal{E}.
	\end{aligned}
	\label{eq:realized_resource_allocation}
\end{equation}
Similarly, for local execution, the realized local computing resource is
{\small$f_{i,0}^{(t)}=\mathbb{I}_{\{e_i^{\mathsf{fin},(t)}=e_0\}}r_i^{\mathsf{f},\mathsf{fin},(t)}$}, while {\small$b_{i,0}^{(t)}=0$}.
By construction, the realized allocation satisfies
\begin{equation}
	\sum_{u_i\in\mathcal{U}^{\mathsf{s},(t)}}
	f_{i,j}^{(t)}
	\le
	F_j^{(t)},
	\quad
	\sum_{u_i\in\mathcal{U}^{\mathsf{s},(t)}}
	b_{i,j}^{(t)}
	\le
	B_j^{(t)},
	\quad \forall e_j\in\mathcal{E}.
	\label{eq:realized_capacity_feasibility}
\end{equation}
The final allocation in~\eqref{eq:realized_resource_allocation} is then used to compute the actual execution delay, energy consumption, monetary cost, and T4E fulfillment. In this manner, the potential game determines distributed scheduling intentions, whereas the realization step converts them into a resource-feasible execution plan.

\vspace{-2.5mm}
\subsection{Fulfillment-Driven Feedback and Adaptive Optimization}
\label{sec:evolutive_feedback}

After task execution, the ESP collects the realized service outcomes from the selected ESs for edge execution and from the originating UDs for local execution, and then evaluates how well the original semantic contract is satisfied. Unlike {\small$\Omega_{i}^{\mathsf{sc},(t)}$}, which is used during scheduling with calibrated fulfillment bounds, the post-execution evaluation is performed with respect to the original fulfillment bounds in {\small$\mathbf{g}_{i}^{(t)}$}. Specifically, the fulfillment deviation of admitted UD {\small$u_i$} is defined as
\begin{equation}
		\begin{aligned}
			\Omega_{i}^{\mathsf{ful},(t)}
			&=~
			\zeta_{\mathsf{D}}
			\left[
			\frac{D_{i}^{(t)}-D_{i}^{\mathsf{max},(t)}}
			{D_{i}^{\mathsf{max},(t)}}
			\right]_+
			+
			\zeta_{\mathsf{E}}
			\left[
			\frac{E_{i}^{(t)}-E_{i}^{\mathsf{max},(t)}}
			{E_{i}^{\mathsf{max},(t)}}
			\right]_+
			\\
			&+
			\zeta_{\mathsf{M}}
			\left[
			\frac{M_{i}^{(t)}-M_{i}^{\mathsf{max},(t)}}
			{M_{i}^{\mathsf{max},(t)}}
			\right]_+
			+
			\zeta_{\mathsf{S}}
			\left[
			S_{i}^{\mathsf{min},(t)}
			-
			S_{i}^{(t)}
			\right]_+.
	\end{aligned}
	\label{eq:ful_deviation}
\end{equation}
The semantic contract fulfillment degree is then defined as
{\small$\varphi_{i}^{(t)}=\exp(-\Omega_{i}^{\mathsf{ful},(t)})$}.
A larger {\small$\varphi_{i}^{(t)}$} indicates better fulfillment, while {\small$\varphi_{i}^{(t)}=1$} implies that all requirements are satisfied.

The above-defined fulfillment degree provides a direct measure of how well the delivered service aligns with the user's original semantic expectations. Since both user requirements and system operating conditions may evolve over time, fulfillment outcomes can exhibit changes that should be reflected in future scheduling decisions. To enable such adaptation in non-stationary E4NetAI environments, \textit{STEPS constructs feedback signals from observed fulfillment outcomes and distinguishes two complementary sources of variation.} The first captures changes in user-side semantic requirements over time, whereas the second captures variations in the system's ability to satisfy semantic contracts under evolving execution conditions. The following definitions formalize these two sources of variation.

\vspace{-1.5mm}
\begin{Defn}(Semantic Request Drift)
% Semantic request drift quantifies temporal changes in user-side service requirements and 
% as reflected by the generated semantic contracts. Specifically, it captures variations in service preferences, fulfillment-bound stringency, and semantic uncertainty. To ensure that changes in user semantics are not masked by the semantic-admission mechanism, the drift 
% is computed over all incoming UDs rather than only the admitted ones. 
% To characterize the semantic requirements of UD {\small$u_i$},
Let {\small $\mathbf{s}_{i}^{(t)}$} denote 
the semantic state of each UD {\small$u_i$} defined as
\begin{equation}
	\mathbf{s}_{i}^{(t)}
	=
	\left[
	(\mathbf{w}_{i}^{(t)})^{\top},
	(\bar{\mathbf{g}}_{i}^{(t)})^{\top},
	\rho_{i}^{(t)}
	\right]^{\top},
	\label{eq:semantic_state}
\end{equation}
where {\small$\bar{\mathbf{g}}_{i}^{(t)}$} is the normalized version of the fulfillment-bound vector {\small$\mathbf{g}_{i}^{(t)}$}. The aggregate semantic state of the incoming requests across all UDs at timeslot {\small$t$} is then given by
\begin{equation}
	\bar{\mathbf{s}}_{\mathsf{all}}^{(t)}
	=
	\begin{cases}
		\frac{1}{|\mathcal{U}^{(t)}|}
		\sum_{u_i\in\mathcal{U}^{(t)}}
		\mathbf{s}_{i}^{(t)},
		&
		|\mathcal{U}^{(t)}|>0,
		\\
		\bar{\mathbf{s}}_{\mathsf{all}}^{{(t-1)}},
		&
		|\mathcal{U}^{(t)}|=0.
	\end{cases}
	\label{eq:average_semantic_state}
\end{equation}
To detect temporal changes in user semantics, STEPS compares the average semantic states observed over a recent window and a historical window. In particular, let {\small$W^{\mathsf{new}}$} and {\small$W^{\mathsf{old}}$} denote the recent and historical time windows, respectively\footnote{For early timeslots, when a complete historical window is unavailable, the averages are computed using the available observations.}. The semantic request drift is defined as
\begin{equation}
	\Delta_{\mathsf{sem}}^{(t)}
	=
	\left\|
	\frac{1}{|W^{\mathsf{new}}|}
	\sum_{\tau\in W^{\mathsf{new}}}
	\bar{\mathbf{s}}_{\mathsf{all}}^{(\tau)}
	-
	\frac{1}{|W^{\mathsf{old}}|}
	\sum_{\tau\in W^{\mathsf{old}}}
	\bar{\mathbf{s}}_{\mathsf{all}}^{(\tau)}
	\right\|_2,
	\label{eq:semantic_drift}
\end{equation}
where a larger value of {\small$\Delta_{\mathrm{sem}}^{(t)}$} indicates a more pronounced shift in the semantic characteristics of incoming service requests.
\end{Defn}
    \vspace{-1.5mm}
\begin{Defn}(Contract-Fulfillment Drift)
% Unlike semantic request drift, which reflects changes in user-side requirements, fulfillment drift is derived from post-execution contract deviations and captures variations in runtime service capability.
Let {\small $P_{\mathsf{ful}}^{(t)}$} denote the average fulfillment pressure among admitted UDs defined as
\begin{equation}
	P_{\mathsf{ful}}^{(t)}
	=
	\begin{cases}
		\frac{1}{|\mathcal{U}^{\mathsf{s},(t)}|}
		\sum_{u_i\in\mathcal{U}^{\mathsf{s},(t)}}
		\Omega_{i}^{\mathsf{ful},(t)},
		&
		|\mathcal{U}^{\mathsf{s},(t)}|>0,
		\\
		0,
		&
		|\mathcal{U}^{\mathsf{s},(t)}|=0,
	\end{cases}
	\label{eq:fulfillment_pressure}
\end{equation}
% To characterize it, we first define the average fulfillment pressure among admitted UDs as
where a larger {\small$P_{\mathrm{ful}}^{(t)}$} indicates that the system experiences greater difficulty in satisfying the semantic contracts of admitted UDs. To identify the changes in contract-fulfillment capability, STEPS compares the average fulfillment pressure observed over a recent window and a historical window, where the (contract-)fulfillment drift is defined as
\begin{equation}
	\Delta_{\mathsf{ful}}^{(t)}
	=
	\left|
	\frac{1}{|W^{\mathsf{new}}|}
	\sum_{\tau\in W^{\mathsf{new}}}
	P_{\mathsf{ful}}^{(\tau)}
	-
	\frac{1}{|W^{\mathsf{old}}|}
	\sum_{\tau\in W^{\mathsf{old}}}
	P_{\mathsf{ful}}^{(\tau)}
	\right|.
	\label{eq:fulfillment_drift}
\end{equation}
A larger value of {\small$\Delta_{\mathrm{ful}}^{(t)}$} indicates a more pronounced change in the system's contract-fulfillment capability.
\end{Defn}

Beyond the above two drift signals, STEPS also incorporates instantaneous system pressure into the feedback mechanism. In particular, the fulfillment pressure {\small$P_{\mathsf{ful}}^{(t)}$} in~\eqref{eq:fulfillment_pressure} reflects the current contract-violation burden among admitted UDs. However, evaluating only admitted UDs may conceal the pressure induced by highly uncertain requests that are rejected during semantic admission. To capture this effect, we define the semantic admission pressure as
\begin{equation}
	P_{\mathsf{adm}}^{(t)}
	=
	\begin{cases}
		1-
		\frac{|\mathcal{U}^{\mathsf{s},(t)}|}
		{|\mathcal{U}^{(t)}|},
		&
		|\mathcal{U}^{(t)}|>0,
		\\
		0,
		&
		|\mathcal{U}^{(t)}|=0.
	\end{cases}
	\label{eq:admission_pressure}
\end{equation}
Based on the above drift and fulfillment-pressure signals, we define the fulfillment-side feedback signal as
\begin{equation}
	\Psi_{\mathsf{ful}}^{(t)}
	=
	\alpha_s
	\Delta_{\mathsf{sem}}^{(t)}
	+
	\alpha_f
	\Delta_{\mathsf{ful}}^{(t)}
	+
	\alpha_p
	P_{\mathsf{ful}}^{(t)},
	\label{eq:feedback_signal}
\end{equation}
where {\small$\alpha_s\ge0$}, {\small$\alpha_f\ge0$}, and {\small$\alpha_p\ge0$} are weighting coefficients. In \eqref{eq:feedback_signal}, the first two terms capture long-term changes in user semantics and fulfillment capability, whereas the third term reflects the instantaneous fulfillment burden experienced by admitted UDs. Note that {\small$P_{\mathsf{adm}}^{(t)}$} is intentionally excluded from {\small$\Psi_{\mathsf{ful}}^{(t)}$}; instead, it is used separately to regulate semantic admission and prevent excessive rejection of uncertain requests.

The resulting feedback signals are then used to adapt both edge-side resource coordination and semantic contract management. First, after each timeslot, the ESP updates the resource prices for the next timeslot according to the intended resource demand reflected by the game-output action profile before feasibility realization. This allows the pricing mechanism to react to latent resource pressure (i.e., the signed mismatch between the intended aggregate resource demand and the available ES capacity before resource capping). In particular, a positive mismatch indicates that the corresponding ES is over-requested by admitted UDs, even though the final realized allocation is later capped by physical resource limits. Specifically,
\begin{equation}
		\begin{aligned}
			\lambda_j^{\mathsf{f}, (t+1)}
			&=
			\left[
			\lambda_j^{\mathsf{f},(t)}
			+
			\mu_{\mathsf{f}}
			(1+\chi^{(t)})
			\frac{
				L_j^{\mathsf{f},(t)}(\mathbf{a}^{\mathsf{fin},(t)})
				-
				F_j^{(t)}}
			{F_j^{(t)}}
			\right]_{\lambda_{\mathsf{min}}^{\mathsf{f}}}^{\lambda_{\mathsf{max}}^{\mathsf{f}}},
			\\
			\lambda_j^{\mathsf{b},(t+1)}
			&=
			\left[
			\lambda_j^{\mathsf{b},(t)}
			+
			\mu_{\mathsf{b}}
			(1+\chi^{(t)})
			\frac{
				L_j^{\mathsf{b},(t)}(\mathbf{a}^{\mathsf{fin},(t)})
				-
				B_j^{(t)}}
			{B_j^{(t)}}
			\right]_{\lambda_{\mathsf{min}}^{\mathsf{b}}}^{\lambda_{\mathsf{max}}^{\mathsf{b}}},
	\end{aligned}
	\label{eq:price_updates}
\end{equation}
where {\small$\mu_{\mathsf{f}}$} and {\small$\mu_{\mathsf{b}}$} are price-update stepsizes, and {\small$[\cdot]_{\lambda_{\mathsf{min}}}^{\lambda_{\mathsf{max}}}$} denotes projection onto the corresponding price interval. The updated prices {\small$\lambda_j^{\mathsf{f},(t+1)}$} and {\small$\lambda_j^{\mathsf{b},(t+1)}$} are used as the next-timeslot unit resource prices in the monetary-cost model in~\eqref{eq:monetary_cost}, and consequently affect the contract-guided service loss and best-response scheduling cost through~\eqref{eq:H_loss} and~\eqref{eq:user_cost}. 
% They do not affect the monetary-cost settlement of the current timeslot. 
Moreover, a larger {\small$\chi^{(t)}$} increases the sensitivity of the price-update process to congestion, thereby strengthening edge-side coordination under stronger non-stationarity.
In addition to price adaptation, STEPS updates the semantic admission threshold, contract conservativeness factor, and edge coordination gain according to the observed feedback signals as 
\begin{equation}
	\begin{aligned}
		\tau^{\rho,{(t+1)}}
		&=
		\Big[
		\tau^{\rho,(t)}\hspace{-1mm}
		-
		\mu_\rho
		\big(\Psi_{\mathsf{ful}}^{(t)}-\Psi_0\big)\hspace{-1mm}
		+
		\mu_a
		\big(P_{\mathsf{adm}}^{(t)}-P_{\mathsf{adm},0}\big)
		\Big]_{\tau_{\mathsf{min}}}^{\tau_{\mathsf{max}}},
		\\
		\beta^{{(t+1)}}
		&=
		\Big[
		\beta^{(t)}
		+
		\mu_\beta
		\big(\Psi_{\mathsf{ful}}^{(t)}-\Psi_0\big)
		\Big]_{\beta_{\mathsf{min}}}^{\beta_{\mathsf{max}}},
		\\
		\chi^{{(t+1)}}
		&=
		\Big[
		\chi^{(t)}
		+
		\mu_\chi
		\big(\Psi_{\mathsf{ful}}^{(t)}-\Psi_0\big)
		\Big]_{\chi_{\mathsf{min}}}^{\chi_{\mathsf{max}}},
	\end{aligned}
	\label{eq:evolutive_updates}
\end{equation}
where {\small$\Psi_0$} denotes the acceptable fulfillment-feedback level and {\small$P_{\mathsf{adm},0}$} denotes the acceptable admission-pressure level. Moreover, {\small$\mu_\rho,~\mu_a,~\mu_\beta,~\mu_\chi\ge0$} are adaptive-control stepsizes, where {\small$\mu_\rho$} controls the sensitivity of the semantic admission threshold to fulfillment-side feedback, {\small$\mu_a$} controls the admission-threshold relaxation driven by admission pressure, {\small$\mu_\beta$} controls the update speed of the contract conservativeness factor, and {\small$\mu_\chi$} controls the update speed of the edge coordination gain. The updated parameters are used in the next timeslot as follows: {\small$\tau^{\rho,(t+1)}$} determines the schedulable UD set through~\eqref{eq:semantic_admission}, {\small$\beta^{(t+1)}$} calibrates the scheduling-stage fulfillment bounds through~\eqref{eq:calibrated_bounds}, and {\small$\chi^{(t+1)}$} enters the best-response scheduling cost and the potential function through~\eqref{eq:user_cost} and~\eqref{eq:potential}. According to \eqref{eq:evolutive_updates}, when {\small$\Psi_{\mathsf{ful}}^{(t)}>\Psi_0$}, the ESP interprets this as evidence of deteriorating fulfillment performance and responds by tightening semantic admission, increasing contract conservativeness, and strengthening edge coordination. When {\small$P_{\mathsf{adm}}^{(t)}>P_{\mathsf{adm},0}$}, the semantic admission threshold is relaxed to avoid excessive rejection of semantically uncertain requests. Consequently, fulfillment pressure and admission pressure influence {\small$\tau^{\rho,(t)}$} in opposite directions, whereas {\small$\beta^{(t)}$} and {\small$\chi^{(t)}$} are driven solely by fulfillment-side feedback.

\vspace{-2.5mm}
\subsection{Overarching Design}
\label{sec:steps_algorithm}
As explained above, STEPS converts natural language service requests into semantic contracts, performs contract-guided scheduling for admitted UDs, executes the resulting service decisions, and updates scheduling-control parameters using post-execution feedback. In this way, STEPS establishes a closed-loop scheduling framework that adapts to evolving user requirements and system conditions. The overall procedure is summarized in Alg. \ref{Alg:STEPS}, comprising the steps below.

\begin{algorithm}[t]
	{\scriptsize \setstretch{0.99}
		\caption{Proposed STEPS}
		\label{Alg:STEPS}
		\LinesNumbered
		\textbf{Input:} Task attributes {\tiny $\{q_i^{(t)}\}$}, SEDs {\tiny$\{\mathcal{L}_i^{(t)}\}$}, system context {\tiny$\xi^{(t)}$}, adaptive control parameters {\tiny$\tau^{\rho,(t)},\beta^{(t)},\chi^{(t)}$}, and maximum round number {\tiny$I_{\mathsf{max}}$}.
		
		% \textbf{Output:} Scheduling decisions {\tiny$\{\mathbf{x}^{(t)},\mathbf{f}^{(t)},\mathbf{b}^{(t)}\}$}, updated prices, and updated parameters {\tiny$\tau^{\rho,(t+1)},\beta^{(t+1)},\chi^{(t+1)}$}.
		
		\For{{\tiny$t=1$} \KwTo {\tiny$T$}}{
			\For{{\tiny$u_i\in\mathcal{U}^{(t)}$}}{
				Parse the SED {\tiny$\mathcal{L}_i^{(t)}$} into semantic levels and confidence {\tiny$\mathbf{y}_i^{(t)}$}\;
				Generate semantic contract {\tiny$\mathcal{C}_i^{(t)}$}\;
			}
			Construct {\tiny$\mathcal{U}^{\mathsf{s},(t)}=\{u_i\in\mathcal{U}^{(t)}\mid \rho_i^{(t)}\le\tau^{\rho,(t)}\}$}\;
			\For{{\tiny$u_i\in\mathcal{U}^{\mathsf{s},(t)}$}}{
				Calibrate contract bounds by~\eqref{eq:calibrated_bounds}\;
				Construct feasible action space {\tiny$\mathcal{A}_i^{(t)}$} by~\eqref{eq:action_space_union}\;
			}
			Initialize action profile {\tiny$\mathbf{a}^{(t)}(0)$} and set {\tiny$r\leftarrow0$}\;
			\Repeat{{\tiny$\mathsf{improved}=\mathsf{false}$} \textbf{or} {\tiny$r\ge I_{\mathsf{max}}$}}{
				Set {\tiny$\mathsf{improved}\leftarrow\mathsf{false}$}\;
				Visit UDs in {\tiny$\mathcal{U}^{\mathsf{s},(t)}$} in an asynchronous order\;
				\For{each visited UD {\tiny$u_i$}}{
					Evaluate {\tiny$\mathcal{J}_i^{(t)}$} in~\eqref{eq:user_cost} for all {\tiny$a_i^{(t)}\in\mathcal{A}_i^{(t)}$}\;
					{\tiny$a_i^{\mathsf{br},(t)}\leftarrow\arg\min_{a_i^{(t)}\in\mathcal{A}_i^{(t)}}
					\mathcal{J}_i^{(t)}(a_i^{(t)},\mathbf{a}_{-i}^{(t)})$}\;
					\If{{\tiny$\mathcal{J}_i^{(t)}(a_i^{\mathsf{br},(t)},\mathbf{a}_{-i}^{(t)})
						<
						\mathcal{J}_i^{(t)}(a_i^{(t)},\mathbf{a}_{-i}^{(t)})$}}{
						Update {\tiny$a_i^{(t)}\leftarrow a_i^{\mathsf{br},(t)}$}\;
						Set {\tiny$\mathsf{improved}\leftarrow\mathsf{true}$}\;
						Update edge loads and congestion penalties by~\eqref{eq:intended_compute_load}--\eqref{eq:congestion_penalty}\;
					}
				}
				{\tiny$r\leftarrow r+1$}\;
			}
			Set {\tiny$\mathbf{a}^{\mathsf{fin},(t)}\leftarrow\mathbf{a}^{(t)}(r)$} as the final intended action profile\;
			Realize feasible service provision by~\eqref{eq:compute_realization_factor}--\eqref{eq:realized_resource_allocation}\;
			Execute scheduled tasks and observe execution service outcomes\;
			Compute fulfillment feedback by~\eqref{eq:ful_deviation},  drifts by~\eqref{eq:semantic_drift}--\eqref{eq:fulfillment_drift}, admission pressure by~\eqref{eq:admission_pressure}, and fulfillment-side feedback by~\eqref{eq:feedback_signal}\;
			Update edge prices for the next timeslot by~\eqref{eq:price_updates}\;
			Update {\tiny$\tau^{\rho,(t+1)}$}, {\tiny$\beta^{(t+1)}$}, and {\tiny$\chi^{(t+1)}$} by~\eqref{eq:evolutive_updates}\;
		}
		\textbf{Return:} {\tiny$\{\mathbf{x}^{(t)},\mathbf{f}^{(t)},\mathbf{b}^{(t)}\}$}, updated prices, and updated adaptive parameters.
	}
\end{algorithm}

\noindent\textbf{Step 1. Semantic Contract Generation (lines 4--6):}
At the beginning of each timeslot, each UD submits its task attributes and associated SED. The LLM-assisted semantic parser extracts semantic service levels and parsing confidence, which are subsequently mapped into semantic contract {\small$\mathcal{C}_{i}^{(t)}$}.

\noindent\textbf{Step 2. Semantic Admission and Feasible Action-Space Construction (lines 7--10):}
The ESP first constructs the schedulable UD set {\small$\mathcal{U}^{\mathsf{s},(t)}$} according to the semantic uncertainty threshold {\small$\tau^{\rho,(t)}$}. For each admitted UD, STEPS calibrates the fulfillment bounds using the semantic uncertainty {\small$\rho_i^{(t)}$} and contract conservativeness factor {\small$\beta^{(t)}$}, and then constructs the corresponding feasible action space {\small$\mathcal{A}_i^{(t)}$} according to~\eqref{eq:action_space_union}. This action space contains the candidate local- and edge-execution decisions that can be evaluated in the subsequent best-response updates.

\noindent\textbf{Step 3. Potential Game and Asynchronous Best Response (lines 11--23):}
STEPS initializes the action profile and uses the asynchronous strict best-response rule in~\eqref{eq:best_response_update} as a distributed local-improvement method for the per-slot surrogate problem {\small$\mathcal{P}_{\mathsf{pg}}^{(t)}$} in~\eqref{eq:pg_surrogate_problem}. During each update round, admitted UDs are visited sequentially in an asynchronous order, and each visited UD updates its action if the best response strictly reduces its scheduling cost {\small$\mathcal{J}_{i}^{(t)}$}. Due to the exact-potential construction, such an action also decreases the surrogate objective {\small$\Phi^{(t)}$}. The process terminates when a complete update round produces no strict improvement or when the maximum round number {\small$I_{\mathsf{max}}$} is reached. The resulting action profile provides the \textit{intended} EN selections and resource-allocation requests.

\noindent\textbf{Step 4. Feasibility Realization and Execution (lines 24--26):}
The intended decisions are converted into physically feasible resource allocations through the realization mechanism. Based on the resulting computing and bandwidth assignments, the tasks are executed and the actual delay, energy consumption, monetary cost, and T4E outcomes are observed.

\noindent\textbf{Step 5. Execution Feedback (line 27):}
The ESP computes the fulfillment deviation {\small$\Omega_i^{\mathsf{ful},(t)}$} and fulfillment degree {\small$\varphi_i^{(t)}$} for each admitted UD. It then evaluates semantic request drift, fulfillment drift, fulfillment pressure, and admission pressure.

\noindent\textbf{Step 6. Adaptive Scheduling-Control Update (lines 28--29):}
Finally, STEPS closes the feedback loop by updating the scheduling-control parameters for the next timeslot. Resource prices are first adjusted according to the intended resource demand reflected by the game outcome. The contract conservativeness factor and edge coordination gain are then updated using the fulfillment-side feedback signal, while the semantic admission threshold is updated according to both fulfillment-side feedback and admission pressure.

\vspace{-1mm}
\subsection{Key Properties}
\label{subsec:properties}
We next summarize several key properties of STEPS. Recall that the original problem {\small$\mathcal{P}$} is transformed into a distributed per-slot scheduling game through the penalized surrogate formulation introduced in Section~\ref{sec:potential_game}. The following results characterize the equilibrium structure, convergence behavior, and stability properties of the proposed game-based scheduling mechanism. We note that for each timeslot {\small$t$}, all pre-decision system states and adaptive-control parameters are fixed during the per-slot game {\small$\mathcal{G}^{(t)}$}, and the interaction among UDs within the game arises only through the ES-side congestion penalties.

\vspace{-1mm}
\begin{thm}(Exact Potential Property of STEPS)
	\label{thm:exact_potential}
	For each timeslot {\small$t$}, the contract-guided scheduling game {\small$\mathcal{G}^{(t)}$} is an exact potential game with the potential function {\small$\Phi^{(t)}(\mathbf{a}^{(t)})$} defined in~\eqref{eq:potential}. Specifically, for any UD {\small$u_i\in\mathcal{U}^{\mathsf{s},(t)}$}, fixed {\small$\mathbf{a}_{-i}^{(t)}$}, and any two feasible actions {\small$a_i^{(t)},\widehat{a}_i^{(t)}\in\mathcal{A}_i^{(t)}$}, we have {\small$ \mathcal{J}_i^{(t)}(\widehat{a}_i^{(t)},\mathbf{a}_{-i}^{(t)})	-	\mathcal{J}_i^{(t)}(a_i^{(t)},\mathbf{a}_{-i}^{(t)})	=	\Phi^{(t)}(\widehat{a}_i^{(t)},\mathbf{a}_{-i}^{(t)})	-	\Phi^{(t)}(a_i^{(t)},\mathbf{a}_{-i}^{(t)}).$}
\end{thm}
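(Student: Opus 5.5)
The plan is a direct difference computation. Fix $t$, a UD $u_i\in\mathcal{U}^{\mathsf{s},(t)}$, the profile $\mathbf{a}_{-i}^{(t)}$, and two actions $a_i^{(t)},\widehat{a}_i^{(t)}\in\mathcal{A}_i^{(t)}$. We will show that both sides of the claimed identity reduce to the same expression. We split $\Phi^{(t)}$ in~\eqref{eq:potential} into the service-loss part $\sum_{u_k}H_k^{(t)}(a_k^{(t)})$ and the congestion part $(1+\chi^{(t)})\sum_{e_j\in\mathcal{E}}\sum_{\mathsf z\in\{\mathsf f,\mathsf b\}}\Gamma_j^{\mathsf z,(t)}(L_j^{\mathsf z,(t)}(\mathbf{a}^{(t)}))$. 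The key fact used throughout is that, with pre-decision states and $\chi^{(t)},\beta^{(t)},\tau^{\rho,(t)}$ fixed in the slot, each $H_k^{(t)}$ depends only on $a_k^{(t)}$. Indeed, \eqref{eq:delay}--\eqref{eq:monetary_cost}, $S_k^{(t)}$, the calibrated bounds~\eqref{eq:calibrated_bounds} and the shortfalls~\eqref{eq:action_slack} involve no other player's action. Hence in $\Phi^{(t)}(\widehat{a}_i,\mathbf{a}_{-i})-\Phi^{(t)}(a_i,\mathbf{a}_{-i})$ the terms $H_k$ with $k\neq i$ cancel, leaving $H_i^{(t)}(\widehat{a}_i)-H_i^{(t)}(a_i)$. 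This is exactly the $H_i$ difference appearing in $\mathcal{J}_i^{(t)}$ from~\eqref{eq:user_cost}.

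For the congestion part, we use the decomposition $L_j^{\mathsf z,(t)}(\mathbf{a}^{(t)})=L_{j,-i}^{\mathsf z,(t)}(\mathbf{a}_{-i}^{(t)})+\mathbb{I}_{\{e(a_i)=e_j\}}r_i^{\mathsf z,(t)}(a_i)$, which follows directly from~\eqref{eq:intended_compute_load} and~\eqref{eq:load_excluding_i}. For every ES $e_j$ and type $\mathsf z$, we then write
\[
\Gamma_j^{\mathsf z,(t)}(L_j^{\mathsf z,(t)}(\mathbf{a}^{(t)}))=\Gamma_j^{\mathsf z,(t)}(L_{j,-i}^{\mathsf z,(t)})+\mathbb{I}_{\{e(a_i)=e_j\}}\big[\Gamma_j^{\mathsf z,(t)}(L_{j,-i}^{\mathsf z,(t)}+r_i^{\mathsf z,(t)}(a_i))-\Gamma_j^{\mathsf z,(t)}(L_{j,-i}^{\mathsf z,(t)})\big].
\]
Summing over $e_j\in\mathcal{E}$, at most one indicator is active. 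By the definition of $\Delta\Gamma_i^{\mathsf z,(t)}$, including the convention $\Delta\Gamma_i^{\mathsf z,(t)}=0$ for $e(a_i)=e_0$, this gives
\[
\sum_{e_j}\Gamma_j^{\mathsf z,(t)}(L_j^{\mathsf z,(t)}(\mathbf{a}))=C_{-i}^{\mathsf z}+\Delta\Gamma_i^{\mathsf z,(t)}(a_i,\mathbf{a}_{-i}),
\]
where $C_{-i}^{\mathsf z}=\sum_{e_j}\Gamma_j^{\mathsf z,(t)}(L_{j,-i}^{\mathsf z,(t)}(\mathbf{a}_{-i}))$ does not depend on $a_i$.

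Combining the two parts, we obtain $\Phi^{(t)}(a_i,\mathbf{a}_{-i})=\mathcal{J}_i^{(t)}(a_i,\mathbf{a}_{-i})+K_{-i}$ with $K_{-i}=\sum_{k\neq i}H_k^{(t)}(a_k)+(1+\chi^{(t)})\sum_{\mathsf z}C_{-i}^{\mathsf z}$, which is independent of $a_i$. Evaluating this identity at $a_i$ and at $\widehat{a}_i$ and subtracting yields the claim.

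The main obstacle is the case analysis on locations: both actions local; one local and one at $e_j$; or the two actions at different ESs $e_j\neq e_{j'}$, so that two ES loads change. The decomposition above is designed to avoid this casework. Since it is written per ES with indicators and relative to the $a_i$-independent baseline $L_{j,-i}$, all cases are covered uniformly, including the same-ES, different-package case. We will also note explicitly that the congestion functions~\eqref{eq:congestion_penalty} are ES-specific but fixed within the slot, and that the projection $[\cdot]_+$ plays no role. The argument needs no convexity; it relies only on exact telescoping against a common baseline.
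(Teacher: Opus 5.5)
Your proposal is correct and follows the same route as the paper's proof. The other players' losses $H_k^{(t)}$ cancel because they do not depend on $a_i^{(t)}$, and writing the congestion sum relative to the $a_i^{(t)}$-independent baseline $L_{j,-i}^{\mathsf z,(t)}$ leaves only the marginal terms $\Delta\Gamma_i^{\mathsf z,(t)}$ after subtraction. Your explicit identity $\Phi^{(t)}(a_i^{(t)},\mathbf{a}_{-i}^{(t)})=\mathcal{J}_i^{(t)}(a_i^{(t)},\mathbf{a}_{-i}^{(t)})+K_{-i}$, obtained from the per-ES indicator decomposition, spells out the paper's one-line claim that the common congestion contribution of $\mathbf{a}_{-i}^{(t)}$ cancels, and it covers the local, same-ES and different-ES cases uniformly.
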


\vspace{-2.5mm}
\begin{thm}(Existence of Pure-Strategy Nash Equilibrium)
	\label{thm:pne_existence}
	For each timeslot {\small$t$}, the contract-guided scheduling game {\small$\mathcal{G}^{(t)}$} admits at least one pure-strategy Nash equilibrium.
\end{thm}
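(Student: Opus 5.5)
The plan is to derive existence directly from Theorem~\ref{thm:exact_potential}, combined with the finiteness of the joint action space. The argument is the standard one for finite exact potential games: any global minimizer of the potential is a pure-strategy Nash equilibrium.

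First, we check that the joint action space {\small$\mathcal{A}^{(t)}=\prod_{u_i\in\mathcal{U}^{\mathsf{s},(t)}}\mathcal{A}_i^{(t)}$} is finite and nonempty. The player set {\small$\mathcal{U}^{\mathsf{s},(t)}\subseteq\mathcal{U}^{(t)}$} is finite. Each {\small$\mathcal{A}_i^{(t)}$} is a finite union of finite sets, since {\small$\mathcal{E}_0$} and the package sets {\small$\mathcal{R}_{i}^{\mathsf{loc},\mathsf{f},(t)}$}, {\small$\mathcal{R}_{i,j}^{\mathsf{f},(t)}$}, {\small$\mathcal{R}_{i,j}^{\mathsf{b},(t)}$} are all finite by the definition of the action space. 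For nonemptiness, we assume (as is implicit in the model) that the local package set is nonempty, so that local execution is always available. If {\small$\mathcal{U}^{\mathsf{s},(t)}=\emptyset$}, the claim holds vacuously.

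Second, we check that {\small$\Phi^{(t)}$} is real-valued on {\small$\mathcal{A}^{(t)}$}. All resource packages are strictly positive by \eqref{eq:resource_coupling}, and {\small$\epsilon>0$} appears in \eqref{eq:delay} and \eqref{eq:energy}, so every delay, energy and cost term is finite. The calibrated bounds are strictly positive whenever the reference indicators and the mapping functions {\small$\delta_{\mathsf{k}}(\cdot)$} are positive, which we assume. The congestion penalties in \eqref{eq:congestion_penalty} are finite because {\small$F_j^{(t)},B_j^{(t)}>0$}. A real-valued function on a finite nonempty set attains its minimum, so there exists {\small$\mathbf{a}^{\star}\in\arg\min_{\mathbf{a}\in\mathcal{A}^{(t)}}\Phi^{(t)}(\mathbf{a})$}.

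Third, we show that {\small$\mathbf{a}^{\star}$} is a pure-strategy Nash equilibrium. Take any {\small$u_i$} and any deviation {\small$\widehat{a}_i\in\mathcal{A}_i^{(t)}$}. By Theorem~\ref{thm:exact_potential},
{\small$\mathcal{J}_i^{(t)}(\widehat{a}_i,\mathbf{a}^{\star}_{-i})-\mathcal{J}_i^{(t)}(a_i^{\star},\mathbf{a}^{\star}_{-i})=\Phi^{(t)}(\widehat{a}_i,\mathbf{a}^{\star}_{-i})-\Phi^{(t)}(\mathbf{a}^{\star})\ge 0$}.
Hence no unilateral deviation is profitable, which is exactly the Nash condition.

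There is no substantive obstacle, since Theorem~\ref{thm:exact_potential} does the real work. The only care needed is the well-definedness and finiteness check in the second step: positivity of the denominators and nonemptiness of each {\small$\mathcal{A}_i^{(t)}$}. We would state these as standing assumptions of the model.
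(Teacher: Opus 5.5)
Your proposal is correct and follows essentially the same route as the paper: the finite joint action space $\mathcal{A}^{(t)}$ guarantees a global minimizer of $\Phi^{(t)}$, and the exact-potential identity of Theorem~\ref{thm:exact_potential} makes that minimizer a pure-strategy Nash equilibrium. The only differences are that the paper argues the last step by contradiction rather than directly, and it omits your well-definedness checks (nonempty action sets, strictly positive calibrated bounds), which are harmless additions that make the argument more careful.
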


\vspace{-2.5mm}
\begin{thm}(Finite-Step Convergence Under Best Response)
	\label{thm:finite_convergence}
	For each timeslot {\small$t$}, consider the asynchronous strict best-response process defined in~\eqref{eq:best_response_update} for {\small$\mathcal{G}^{(t)}$}. If the process terminates only when a full update round produces no strict improvement, then it converges in finite steps to a pure-strategy Nash equilibrium of the per-slot surrogate game.
\end{thm}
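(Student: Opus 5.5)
The plan is the standard finite-improvement-path argument built on Theorem~\ref{thm:exact_potential}. For a fixed timeslot $t$, the player set $\mathcal{U}^{\mathsf{s},(t)}$ is finite and each $\mathcal{A}_i^{(t)}$ is finite, because $\mathcal{E}_0$ and the resource-package sets $\mathcal{R}_{i}^{\mathsf{loc},\mathsf{f},(t)}$, $\mathcal{R}_{i,j}^{\mathsf{f},(t)}$, $\mathcal{R}_{i,j}^{\mathsf{b},(t)}$ are finite. Hence the joint action space $\mathcal{A}^{(t)}=\prod_i\mathcal{A}_i^{(t)}$ is finite. Moreover, all system states and the parameters $\beta^{(t)}$, $\chi^{(t)}$ and prices are frozen during the per-slot game, so $\Phi^{(t)}$ in~\eqref{eq:potential} is a fixed real-valued function on this finite set.

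\textbf{Step 1: each successful update strictly decreases $\Phi^{(t)}$.} Suppose UD $u_i$ is visited and performs a successful update, moving from $a_i^{(t)}$ to $a_i^{\mathsf{br},(t)}$ with $\mathcal{J}_i^{(t)}(a_i^{\mathsf{br},(t)},\mathbf{a}_{-i}^{(t)})<\mathcal{J}_i^{(t)}(a_i^{(t)},\mathbf{a}_{-i}^{(t)})$. Theorem~\ref{thm:exact_potential} then gives
$\Phi^{(t)}(a_i^{\mathsf{br},(t)},\mathbf{a}_{-i}^{(t)})-\Phi^{(t)}(a_i^{(t)},\mathbf{a}_{-i}^{(t)})<0$.
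So along the sequence of profiles produced by the asynchronous process, $\Phi^{(t)}$ strictly decreases at every successful update and is unchanged otherwise. The load refresh after each update only keeps $\mathbf{a}_{-i}^{(t)}$ consistent with the current profile, so this is the genuine profile sequence to which Theorem~\ref{thm:exact_potential} applies.

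\textbf{Step 2: only finitely many successful updates can occur.} A strictly decreasing sequence of values of $\Phi^{(t)}$ can never revisit a profile, so the sequence of distinct profiles is an improvement path in a finite set. The number of successful updates is therefore at most $|\mathcal{A}^{(t)}|-1$. One can sharpen this by letting $\delta>0$ be the minimum positive gap among the finitely many values of $\Phi^{(t)}$. The count is then at most $(\max\Phi^{(t)}-\min\Phi^{(t)})/\delta$, and this bound holds regardless of the visiting order.

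\textbf{Step 3: the process stops after finitely many rounds.} Every improving round contains at least one successful update, so there are finitely many improving rounds. The first non-improving full round then occurs after finitely many rounds, which is exactly the termination condition assumed in the statement, i.e., the cap $I_{\mathsf{max}}$ is not the binding stop.

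\textbf{Step 4: the terminal profile is a pure-strategy Nash equilibrium.} In the final round every $u_i$ is visited and does not strictly improve. Because nobody moves during that round, $\mathbf{a}_{-i}^{(t)}$ seen by $u_i$ equals the terminal profile's. Hence $\mathcal{J}_i^{(t)}(a_i^{(t)},\mathbf{a}_{-i}^{(t)})\le \mathcal{J}_i^{(t)}(a_i',\mathbf{a}_{-i}^{(t)})$ for all $a_i'\in\mathcal{A}_i^{(t)}$, since the argmin in~\eqref{eq:best_response_update} attains the minimum over the full finite set. This is the Nash condition for every player simultaneously.

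The only delicate point is Step~4's claim that all players see the same frozen profile in the last round, and that ties are not counted as improvements so that cycling among equal-cost actions cannot occur. Both follow directly from the strict-improvement rule.
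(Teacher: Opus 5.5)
Your proof is correct and takes the same route as the paper's. A strict cost decrease becomes an equal strict decrease of $\Phi^{(t)}$ via Theorem~\ref{thm:exact_potential}, finiteness of $\mathcal{A}^{(t)}$ rules out revisiting profiles and hence infinitely many updates, and the terminal non-improving round certifies the Nash condition. Your Step~4 is in fact more careful than the paper in noting that every player in that last round faces the same frozen terminal profile.

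The only slip is the aside in Step~2. The $\delta$-based count is not a sharpening of $|\mathcal{A}^{(t)}|-1$ in general, because uneven gaps between potential values can make $(\max\Phi^{(t)}-\min\Phi^{(t)})/\delta$ far larger. Nothing in your argument depends on it, though.
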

\vspace{-2.5mm}
\begin{thm}(Boundedness of Adaptive Control Parameters)
	\label{thm:bounded_updates}
	For all timeslots {\small$t\in\mathcal{T}$}, the adaptive control parameters satisfy	{\small$\tau^{\rho,(t)}\in[\tau_{\mathsf{min}},\tau_{\mathsf{max}}]$},	{\small$\beta^{(t)}\in[\beta_{\mathsf{min}},\beta_{\mathsf{max}}]$}, and	{\small$\chi^{(t)}\in[\chi_{\mathsf{min}},\chi_{\mathsf{max}}]$}.
	Moreover, if the resource-price updates employ bounded projection, then the resource prices also remain within their prescribed intervals.
\end{thm}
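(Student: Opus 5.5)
The plan is induction on $t$, using the defining property of the projection operators. For any $x\in\mathbb{R}$ and any interval with $a\le b$, the projection $[x]_a^b=\min\{\max\{x,a\},b\}$ satisfies $a\le [x]_a^b\le b$. This holds regardless of the value of $x$. The first step is therefore to check that the argument inside each projection in \eqref{eq:evolutive_updates} and \eqref{eq:price_updates} is a finite real number at every timeslot. Once that is established, the upper and lower bounds follow immediately from the projection, without any analysis of the feedback dynamics.

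\emph{Base case.} We take the initial values $\tau^{\rho,(1)}\in[\tau_{\mathsf{min}},\tau_{\mathsf{max}}]$, $\beta^{(1)}\in[\beta_{\mathsf{min}},\beta_{\mathsf{max}}]$, $\chi^{(1)}\in[\chi_{\mathsf{min}},\chi_{\mathsf{max}}]$, and the initial prices inside their intervals. These are implicit in the design, and if needed they can be enforced by applying one projection at initialization.

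\emph{Inductive step.} Assume the parameters lie in their intervals at slot $t$. The quantities entering the updates are then finite:
\begin{itemize}
\item $P_{\mathsf{adm}}^{(t)}\in[0,1]$ by \eqref{eq:admission_pressure}.
\item $P_{\mathsf{ful}}^{(t)}$ is finite because every realized $D_i^{(t)},E_i^{(t)},M_i^{(t)}$ is finite. The constant $\epsilon>0$ in \eqref{eq:delay}--\eqref{eq:energy} prevents blow-up, and the resource packages are finite. The normalizing bounds $D_i^{\mathsf{max},(t)}$, etc., are positive, which we assume of the reference indicators and the $\delta$ mappings.
\item $\Delta_{\mathsf{sem}}^{(t)}$ and $\Delta_{\mathsf{ful}}^{(t)}$ are norms of averages of finitely many finite vectors, so $\Psi_{\mathsf{ful}}^{(t)}<\infty$ by \eqref{eq:feedback_signal}.
\item For the prices, the load $L_j^{\mathsf z,(t)}(\mathbf a^{\mathsf{fin},(t)})$ is a finite sum of finite packages, and $F_j^{(t)},B_j^{(t)}>0$. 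Since $\chi^{(t)}$ is bounded by the inductive hypothesis, the increment in \eqref{eq:price_updates} is finite.
\end{itemize}
Applying the projection property then places $\tau^{\rho,(t+1)}$, $\beta^{(t+1)}$, $\chi^{(t+1)}$, $\lambda_j^{\mathsf f,(t+1)}$ and $\lambda_j^{\mathsf b,(t+1)}$ in their prescribed intervals, which closes the induction for all $t\in\mathcal T$.

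The only delicate point is the well-definedness (finiteness) check, in particular the division by the bounds and capacities. I will handle it by stating the standing assumptions explicitly: positive capacities, positive reference indicators, and nonempty intervals with $\tau_{\mathsf{min}}\le\tau_{\mathsf{max}}$ and similarly for the others. The empty-set conventions in \eqref{eq:average_semantic_state}, \eqref{eq:fulfillment_pressure} and \eqref{eq:admission_pressure} cover the degenerate slots. The remainder of the argument is bookkeeping.
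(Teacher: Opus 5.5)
Your proposal is correct and follows the same route as the paper. The paper's proof is simply the range property $[x]_a^b=\min\{\max\{x,a\},b\}\in[a,b]$ applied to the projected updates in \eqref{eq:evolutive_updates} and \eqref{eq:price_updates}. Your explicit base case and well-definedness check are extra care that the paper omits; the base case is genuinely needed for $t=1$, because $\tau^{\rho,(1)}$, $\beta^{(1)}$, $\chi^{(1)}$ and the initial prices are inputs rather than outputs of a projection.
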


For brevity, the detailed computational complexity analysis of Alg.~\ref{Alg:STEPS} and the proofs of Theorems~\ref{thm:exact_potential}--\ref{thm:bounded_updates} are provided in the Appendix A.

\section{Numerical Evaluations}
\label{sec:evaluation}

\subsection{Experimental Setting}
\label{sec:experimental_settings}

We evaluate STEPS under both synthetic and trace-driven real-world settings, with all results averaged over {\small$30$} independent Monte Carlo trials unless otherwise specified. Natural-language requests are generated from semantic templates covering latency, energy, monetary cost, and T4E requirements, while ambiguous, conflicting, and out-of-scope requests are injected with ratio {\small$r_{\mathsf{unc}}\in[0,0.5]$} to emulate semantic uncertainty. For non-stationary evaluation, a drift event is introduced at timeslot {\small$t_{\mathsf d}=50$}, after which the template-sampling distribution is shifted toward stricter service expectations and the task data size and computation workload are scaled with drift intensity {\small$\eta_{\mathsf d}=0.6$}. Semantic parsing is performed by Qwen3.5-4B served locally through Ollama, where Ollama is used only as the inference runtime rather than as a separate parser model. The parser outputs semantic service levels and parsing confidence, which are used to generate semantic contracts, and template-level parsing results are cached for reproducibility. All experiments are conducted on a workstation equipped with a 12th Gen Intel Core i9-12900H CPU and an NVIDIA GeForce RTX 3060 GPU. Following representative studies~\cite{xiao2024adaptive,li2024optimal,sun2026igaa,wang2025intent}, key simulation parameters are summarized in Table~\ref{tab:main_settings}.
\begin{table}[t]
\vspace{-7mm}
	\centering
	\caption{Simulation settings.}
	\label{tab:main_settings}
    \vspace{-2mm}
	\setlength{\tabcolsep}{1.2mm}
	\renewcommand{\arraystretch}{0.8}
	\begin{tabular}{p{0.30\columnwidth}|p{0.62\columnwidth}}
		\hline
		\rowcolor[gray]{0.9} \scriptsize \textbf{Aspect} & \scriptsize \textbf{Setting} \\
		\hline
		\scriptsize Network Topology 
		&\scriptsize $800\times800~\mathsf{m}^2$ area; $|\mathcal{E}|=8$ by default; $50$--$300$ UDs. \\
		\hline
		\scriptsize Wireless and Computing Resources 
		&\scriptsize $N_0=-174~\mathsf{dBm/Hz}$; bandwidth $20$--$40~\mathsf{MHz}$; transmit power $10$--$200~\mathsf{mW}$; local/edge CPU $0.5$--$2.0$ / $20$--$80~\mathsf{GHz}$. \\
		\hline
		\scriptsize Task and Trust Settings 
		&\scriptsize Data size $5$--$20~\mathsf{MB}$; workload $0.5$--$5.0\times10^9$ cycles; trust score $0.4$--$1.0$. \\
		\hline
		\scriptsize Semantic Requests 
		&\scriptsize  Four dimensions: delay, energy, monetary cost, and T4E; three levels: low, medium, high; $r_{\mathsf{unc}}\in[0,0.5]$. \\
		\hline
		\scriptsize Non-Stationarity 
		&\scriptsize Semantic-preference shift and workload fluctuation with drift intensity of $0.6$. \\
		\hline
		\scriptsize LLM Parser 
		&\scriptsize Qwen3.5-4B served locally through Ollama; cached template-level parsing. \\
		\hline
		\scriptsize Parser Confidence 
		&\scriptsize  Average confidence $0.723$; normal/uncertain-template confidence $0.828$/$0.357$ on average. \\
		\hline
		\scriptsize STEPS Configuration 
		&\scriptsize  $\tau_0^\rho=0.86$, $\tau_{\min}^\rho=0.55$, $\beta_0=0.50$, $\chi_0=1.5$, $\gamma_f=\gamma_b=10$, \scriptsize resource packages $\{0.04,0.08,0.14\}$. \\
		\hline
		\scriptsize Evaluation Settings 
		&\scriptsize  $T=100$; $30$ Monte Carlo trials; $I_{\max}=20$; violation tolerance $\epsilon_{\mathsf v}=0.10$. \\
		\hline
	\end{tabular}
	\vspace{-5.5mm}
\end{table}

\vspace{-2mm}
\subsection{Benchmark Methods}
\label{sec:benchmarks}

We compare STEPS with five representative benchmarks:
\textit{(i) Param-Opt} is a parameter-driven optimization baseline implemented following conventional edge AI resource scheduling studies~\cite{li2024optimal}. It assumes that numerical service requirements, such as delay bounds, energy budgets, and monetary-cost limits are directly available from the ground-truth request templates. 
\textit{(ii) Direct-LLM} is an LLM-enabled intent-translation baseline implemented following service configuration methods~\cite{mekrache2024llm}. It maps natural language requests into scheduling parameters without explicitly modeling semantic uncertainty or contract calibration. 
\textit{(iii) DRL-Opt} is a hybrid deep reinforcement learning (DRL)-and-optimization baseline following learning-aided edge resource scheduling studies~\cite{li2024optimal}. A DRL policy first selects ENs and resource levels, after which an optimization-based repair step enforces capacity feasibility.
\textit{(iv) Intent-Reconf} is a violation-triggered intent management baseline following intent-driven resource management studies~\cite{akbari2025intentcontinuum}. It monitors post-execution service violations and triggers corrective actions, such as task migration, resource scaling, or local fallback, rather than proactively calibrating semantic contracts before scheduling.
\textit{(v) GAH} is a generalized-assignment heuristic baseline following edge LLM scheduling studies~\cite{zhang2025beyond}. It assigns service tasks to ENs under resource-capacity constraints using a generalized assignment heuristic.

We also evaluate two ablation variants of STEPS:
\textit{NoCal} removes semantic admission and uncertainty-aware calibration by setting {\small$\tau^{\rho,(t)}=1$} and {\small$\beta^{(t)}=0$};
\textit{NoEvo} disables fulfillment-driven evolution by fixing {\small$\tau^{\rho,(t)}=1$}, {\small$\beta^{(t)}=\beta_0$}, and {\small$\chi^{(t)}=\chi_0$}, while retaining fixed uncertainty-aware calibration.

\subsection{Evaluation Protocol and Performance Metrics}
\label{sec:metrics}
\vspace{-.5mm}

For fair comparison, all methods are evaluated under the same network topology, task workloads, wireless conditions, edge resource states, and request templates. Fulfillment-related metrics are computed using the same template-derived semantic contracts and task attributes, rather than method-dependent internal bounds, to avoid biased self-assessment. Let {\small$\mathcal{I}_m=\{(i,t)\mid u_i \text{ is served under method } m \text{ at timeslot } t\}$} denote the served user--timeslot pairs of method $m$. We consider the following performance metrics (for methods without semantic admission, all UDs are treated as admitted):

\noindent\textit{1) Semantic Contract Fulfillment and Admission:}
We use average fulfillment degree (AFD), contract violation rate (CVR), average contract-guided service loss (ACSL), and semantic admission ratio (SAR) to evaluate contract fulfillment and semantic admission behavior. Specifically, {\small$\mathsf{AFD}_m=\frac{1}{|\mathcal{I}_m|}\sum_{(i,t)\in\mathcal{I}_m}\varphi_i^{(t)}$} measures the average semantic contract fulfillment degree, where a larger value indicates better fulfillment. To avoid over-counting negligible deviations, CVR is defined as {\small$\mathsf{CVR}_m=\frac{1}{|\mathcal{I}_m|}\sum_{(i,t)\in\mathcal{I}_m}\mathbb{I}_{\{\Omega_i^{\mathsf{ful},(t)}>\epsilon_{\mathsf v}\}}$}, where {\small$\epsilon_{\mathsf v}=0.10$} unless otherwise specified. A smaller CVR indicates fewer significant contract violations. Since CVR only measures the frequency of threshold-exceeding fulfillment deviations, rather than their magnitude or the overall service quality, it is interpreted jointly with ACSL. Specifically, {\small$\mathsf{ACSL}_m=\frac{1}{|\mathcal{I}_m|}\sum_{(i,t)\in\mathcal{I}_m}J_{i,\mathsf{eval}}^{(t)}$} measures the contract-guided service loss computed using the reference preferences and original fulfillment bounds. Since STEPS includes semantic admission, we further report {\small$\mathsf{SAR}_m=\frac{1}{T}\sum_{t=1}^{T}\frac{|\mathcal{U}_m^{\mathsf{s},(t)}|}{|\mathcal{U}^{(t)}|}$} to quantify the admission cost associated with fulfillment improvement.

\noindent\textit{2) Service Efficiency and Online Scalability:}
We use average service delay (ASD) and running time (RT) to evaluate service efficiency and online scheduling overhead. ASD is defined as {\small$\mathsf{ASD}_m=\frac{1}{|\mathcal{I}_m|}\sum_{(i,t)\in\mathcal{I}_m}D_i^{(t)}$}, where {\small$D_i^{(t)}$} is the end-to-end delay under the final feasible allocation. RT denotes the average computational time required to complete one per-slot scheduling decision, excluding semantic parsing time. %In the scalability experiments, we jointly increase the numbers of UDs and ESs, and report AFD, ACSL, RT, and ASD under different problem scales. Additional overload-related indicators are reported only when needed for interpreting resource-contention effects, but are not used as primary comparison metrics.

\noindent\textit{3) Non-Stationary Drift Adaptation:}
To evaluate adaptation under non-stationary conditions, we introduce semantic-preference and workload shifts at timeslot {\small$t_{\mathsf d}$}. We report the moving-average AFD, CVR, and ACSL to characterize how different methods respond to such shifts over time. For a generic metric {\small$M\in{\mathsf{AFD},\mathsf{CVR},\mathsf{ACSL}}$}, its moving-average value is defined as
{\small$\widetilde{M}_m^{(t)}=\frac{1}{|\mathcal{W}_t|}\sum_{\tau\in\mathcal{W}_t}M_m^{\mathsf{(\tau)}}$},
where {\small$\mathcal{W}_t$} is the sliding window ending at timeslot {\small$t$}. Based on the moving-average ACSL, the post-drift cumulative-average ACSL is defined as
{\small$\mathsf{PD\text{-}ACSL}_m^{(t)}=\frac{1}{t-t_{\mathsf d}+1}\sum_{\tau=t_{\mathsf d}}^{t}\widetilde{\mathsf{ACSL}}_m^{\mathsf{(\tau)}}$},
{\small$t\ge t_{\mathsf d}$}. A smaller {\small$\mathsf{PD\text{-}ACSL}$} indicates lower accumulated post-drift service loss. %These drift-oriented metrics are more informative than a strict recovery-time metric when full recovery is not always reached within the finite simulation horizon.

\vspace{-2.75mm}
\subsection{Synthetic Experiments}
\label{sec:synthetic_experiments}
\vspace{-.5mm}

\begin{figure*}[t]\vspace{-7.2mm}
	\centering
	\setlength{\abovecaptionskip}{-1mm}
	\includegraphics[width=0.98\textwidth]{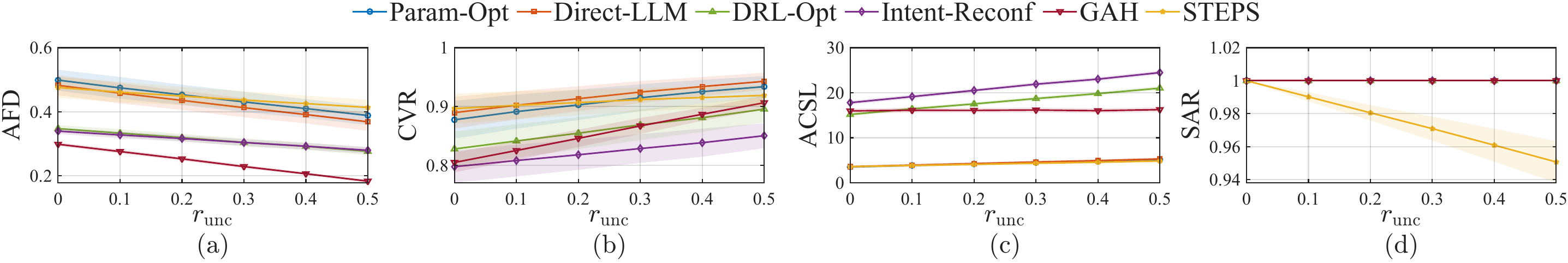}
    \vspace{-1.0mm}
	\caption{Impact of semantic uncertainty on various performance metrics: (a) AFD, (b) CVR, (c) ACSL, and (d) SAR.}
	\label{fig:semantic_uncertainty}
    \vspace{-5mm}
\end{figure*}

\subsubsection{Impact of Semantic Uncertainty}

We first evaluate the robustness of different methods under increasing semantic uncertainty. As specified in Section~\ref{sec:experimental_settings} and Table~\ref{tab:main_settings}, the uncertain-request ratio $r_{\mathsf{unc}}$ controls the fraction of ambiguous, conflicting, or out-of-scope natural language requests. 
Fig.~\ref{fig:semantic_uncertainty}(a) shows that the AFD of most methods decreases as $r_{\mathsf{unc}}$ increases, since uncertain requests make it more difficult to infer reliable service preferences and fulfillment bounds. Param-Opt and Direct-LLM achieve relatively high AFD under low uncertain-request ratios, as Param-Opt relies on template-provided numerical requirements and Direct-LLM directly translates natural language requests into scheduling parameters. However, their fulfillment performance degrades more rapidly as $r_{\mathsf{unc}}$ increases. In contrast, STEPS maintains a more stable AFD by incorporating parsing confidence into semantic contract generation and contract-guided scheduling.
Fig.~\ref{fig:semantic_uncertainty}(b) reports the CVR under the significant violation tolerance $\epsilon_{\mathsf v}=0.10$. As $r_{\mathsf{unc}}$ increases, most methods suffer from higher violation rates, indicating that ambiguous requests increase the likelihood of mismatch between inferred contracts and realized service outcomes. Since CVR only reflects the frequency of significant deviations, we further use ACSL to evaluate the severity of contract-guided service loss.
Fig.~\ref{fig:semantic_uncertainty}(c) further reports the ACSL, which measures the severity of contract-guided service loss. STEPS achieves ACSL comparable to Direct-LLM and clearly lower than Param-Opt, DRL-Opt, Intent-Reconf, and GAH across all uncertain-request ratios. This suggests that direct LLM-based translation can reduce aggregate service loss under controlled semantic templates, while STEPS achieves similar loss reduction through an explicit semantic-contract interface rather than directly mapping natural language requests into scheduling parameters.
Fig.~\ref{fig:semantic_uncertainty}(d) reports the SAR of different methods. Since methods without semantic admission accept tasks associated with all service requests, their SAR remains close to one. For STEPS, SAR slightly decreases as $r_{\mathsf{unc}}$ increases, indicating that semantic admission becomes more selective under higher uncertainty. Nevertheless, SAR remains high even when $r_{\mathsf{unc}}=0.5$, showing that STEPS improves fulfillment robustness without relying on excessive rejection of uncertain requests. Combining Fig.~\ref{fig:semantic_uncertainty} (a)--(d), STEPS provides more reliable overall fulfillment behavior by maintaining stable AFD, low CVR, competitive ACSL, and controlled semantic admission under increasing semantic uncertainty.

\subsubsection{Scalability Under Different Problem Scales}

\begin{figure*}[t]\vspace{-7.2mm}
	\centering
	\setlength{\abovecaptionskip}{-1mm}
	\includegraphics[width=1.02\textwidth]{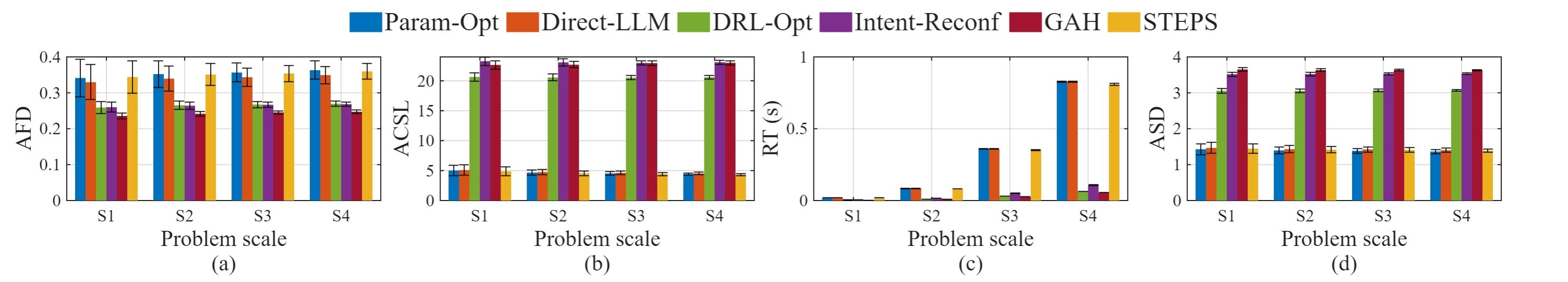}
       \vspace{-5.5mm}
	\caption{Scalability under different problem scales: (a) AFD, (b) ACSL, (c) RT, and (d) ASD.}
	\label{fig:problem_scale}
    \vspace{-5mm}
\end{figure*}

We next evaluate the scalability of STEPS under different problem scales. Specifically, the four scales are defined as
$\mathrm{S1}: (|\mathcal{U}^{(t)}|,|\mathcal{E}|)=(50,2)$,
$\mathrm{S2}: (100,4)$,
$\mathrm{S3}: (200,8)$, and
$\mathrm{S4}: (300,12)$.
These settings jointly increase the numbers of UDs and ESs, leading to progressively larger scheduling instances.
Fig.~\ref{fig:problem_scale}(a) shows that STEPS maintains stable and high AFD across all scales. Compared with other baselines such as DRL-Opt, Intent-Reconf, and GAH, STEPS achieves better fulfillment performance because it jointly considers semantic preferences, fulfillment bounds, and edge-side resource contention. Param-Opt and Direct-LLM also achieve favorable AFD due to their parameter-driven scheduling structures, but they do not explicitly account for semantic uncertainty or fulfillment-driven contract evolution.
Fig.~\ref{fig:problem_scale}(b) compares ACSL under different scales, where STEPS achieves ACSL comparable to Param-Opt and Direct-LLM, while substantially outperforming DRL-Opt, Intent-Reconf, and GAH. This verifies that contract-guided scheduling can effectively control semantic contract service loss in larger edge systems. The performance gap between STEPS and the non-contract baselines suggests that explicitly modeling semantic preferences, fulfillment bounds, and uncertainty is critical for natural language-driven edge scheduling.
Fig.~\ref{fig:problem_scale}(c) reports RT: as the problem scale increases, RT grows due to enlarged action spaces and stronger resource contention. STEPS incurs higher computational overhead than lightweight baselines such as GAH and Intent-Reconf, but remains comparable to Param-Opt and Direct-LLM. This indicates that introducing semantic contracts and asynchronous best-response updates remains computationally manageable for online scheduling, especially considering the substantial reduction in ACSL.
Fig.~\ref{fig:problem_scale}(d) presents ASD, where STEPS achieves delay close to Param-Opt and Direct-LLM, and significantly lower than DRL-Opt, Intent-Reconf, and GAH. This indicates that STEPS reduces semantic contract service loss without causing excessive delay degradation. Overall, Fig.~\ref{fig:problem_scale} demonstrates that STEPS remains scalable as the numbers of UDs and ESs increase, achieving a favorable balance among fulfillment quality, scheduling overhead, and service delay.

\subsubsection{Adaptation Under Non-Stationary Drift}

As shown in Fig.~\ref{fig:drift_adaptation}, a drift event is introduced at $t_{\mathsf d}=50$, as marked by the vertical dashed line. After the drift, UD semantic preferences and workload conditions change, causing a mismatch between prior scheduling behavior and new fulfillment requirements.
Fig.~\ref{fig:drift_adaptation}(a) shows the moving-average AFD over time. Before the drift, Param-Opt, Direct-LLM, and STEPS achieve relatively high fulfillment degrees. After the drift, all methods experience a clear AFD decrease, indicating that the drift alters the service-fulfillment conditions. STEPS maintains higher post-drift fulfillment performance because its adaptive control parameters are updated according to fulfillment feedback.
Fig.~\ref{fig:drift_adaptation}(b) reports the moving-average CVR, which increases after the drift for all methods, indicating that non-stationary semantic and workload changes introduce additional violation pressure. Since CVR reflects the occurrence rather than the magnitude of significant deviations, we further examine ACSL and post-drift cumulative-average ACSL.
Fig.~\ref{fig:drift_adaptation}(c) shows that the ACSL of all methods increases after the drift, where STEPS maintains one of the lowest ACSL values during the post-drift period and outperforms DRL-Opt, Intent-Reconf, and GAH. This demonstrates that fulfillment-driven adaptive updates help adjust semantic admission, contract conservativeness, and edge coordination to the changing environment.
Fig.~\ref{fig:drift_adaptation}(d) further reports post-drift cumulative-average ACSL, where STEPS consistently achieves low cumulative-average ACSL throughout the post-drift period and outperforms DRL-Opt, Intent-Reconf, and GAH. Compared with Direct-LLM, STEPS sustains lower post-drift service loss by explicitly updating contract-related control parameters instead of using fixed semantic-to-parameter mappings.

\subsubsection{Ablation Study}

\begin{table}[t]
\vspace{-3mm}
	\centering
	\caption{Ablation study results (mean $\pm$ standard deviation)}
	\label{tab:ablation}
    \vspace{-2.9mm}
    {\scriptsize
	\setlength{\tabcolsep}{1.2mm}
	\renewcommand{\arraystretch}{0.75}
	\begin{tabular}{l|c|c|c|c}
		\hline
		\scriptsize \textbf{Method} 
		& \textbf{AFD}$\uparrow$ 
		& \textbf{CVR}$\downarrow$ 
		& \textbf{ACSL}$\downarrow$ 
		& \textbf{SAR} \\
		\hline
		\scriptsize NoCal 
		&\scriptsize $0.267\pm0.017$ 
		&\scriptsize $0.958\pm0.011$ 
		&\scriptsize $15.647\pm0.320$ 
		&\scriptsize $1.000\pm0.000$ \\
		\scriptsize NoEvo 
		&\scriptsize $0.277\pm0.017$ 
		&\scriptsize $0.951\pm0.013$ 
		&\scriptsize $15.568\pm0.321$ 
		&\scriptsize $1.000\pm0.000$ \\
		\rowcolor[gray]{0.9} \scriptsize STEPS 
		&\scriptsize $\mathbf{0.300\pm0.017}$ 
		&\scriptsize $\mathbf{0.948\pm0.012}$ 
		&\scriptsize $\mathbf{14.473\pm0.340}$ 
		&\scriptsize $0.905\pm0.004$ \\
		\hline
	\end{tabular}
    }
	\vspace{-4mm}
\end{table}

To examine the contribution of the key modules in STEPS, we conduct an ablation study. As shown in Table~\ref{tab:ablation}, NoCal yields the lowest AFD and the highest ACSL, although it admits all requests. This indicates that directly using parsed semantic requirements without uncertainty-aware calibration can lead to inaccurate fulfillment targets and inefficient scheduling. NoEvo improves over NoCal, showing the benefit of fixed uncertainty-aware contract calibration. However, because all adaptive updates are disabled, NoEvo cannot adapt contract conservativeness or edge coordination according to fulfillment feedback, nor can it dynamically regulate semantic admission under changing uncertainty.
The complete STEPS framework achieves the highest AFD and the lowest CVR and ACSL. Although its SAR is lower than those of NoCal and NoEvo, STEPS still maintains a high admission ratio while substantially reducing ACSL. This confirms that the performance gain of STEPS is attributed to uncertainty-aware contract calibration and fulfillment-driven adaptation rather than excessive request rejection.

\begin{figure*}[t]
	\centering\vspace{-8.7mm}
	\setlength{\abovecaptionskip}{-1mm}
	\includegraphics[width=0.98\textwidth]{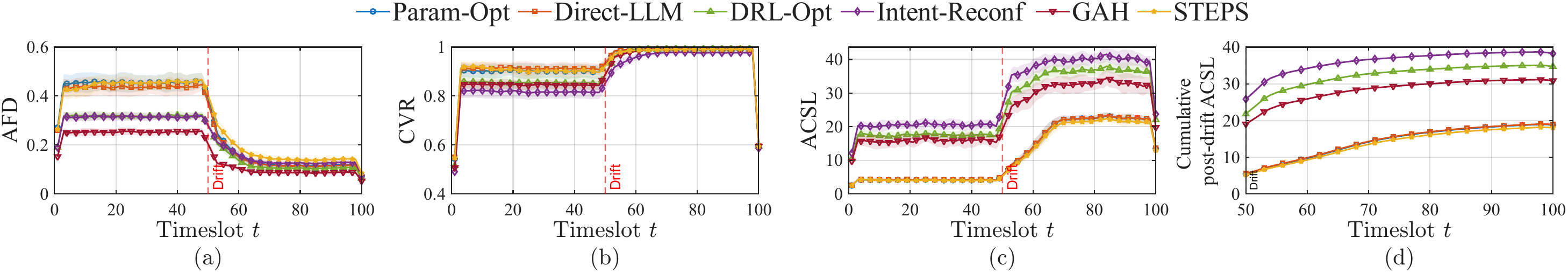}
	\caption{Adaptation under non-stationary drift: (a) moving-average AFD, (b) moving-average CVR, (c) moving-average ACSL, and (d) post-drift cumulative-average ACSL.}
	\label{fig:drift_adaptation}
    \vspace{-7.3mm}
\end{figure*}
\subsection{Experiments on a Real-World Dataset}
\label{sec:real_world_experiments}

% To further validate STEPS under realistic spatial deployments, 
We next conduct an evaluation using the EUA dataset~\cite{EUAdata}. We select $300$ UDs and $12$ edge sites from the Melbourne central business district (CBD) topology, where the average and maximum nearest-edge distances are $181.574$ m and $367.711$ m, respectively.
\begin{table}[t]
	\centering
	\caption{Trace-driven hybrid evaluation on Melbourne CBD.}
	\label{tab:eua_trace}
    \vspace{-2.7mm}
	{\scriptsize
	\setlength{\tabcolsep}{1.2mm}
	\renewcommand{\arraystretch}{0.75}
	\begin{tabular}{l|c|c|c|c|c|c}
		\hline
		\scriptsize\textbf{Method} 
		&\scriptsize \textbf{AFD}$\uparrow$ 
		&\scriptsize \textbf{CVR}$\downarrow$ 
		&\scriptsize \textbf{ACSL}$\downarrow$ 
		&\scriptsize \textbf{ASD}$\downarrow$ 
		&\scriptsize \textbf{RT}$\downarrow$ 
		&\scriptsize \textbf{Post-ACSL}$\downarrow$ \\
		\hline
		\scriptsize Param-Opt 
		&\scriptsize $\mathbf{0.3727}$ 
		&\scriptsize $0.9506$ 
		&\scriptsize $4.2859$ 
		&\scriptsize $\mathbf{1.3544}$ 
		&\scriptsize $1.3290$ 
		&\scriptsize $19.6168$ \\
		\scriptsize Direct-LLM 
		&\scriptsize $0.3587$ 
		&\scriptsize $0.9536$ 
		&\scriptsize $4.3792$ 
		&\scriptsize $1.3923$ 
		&\scriptsize $1.3152$ 
		&\scriptsize $19.7405$ \\
		\scriptsize DRL-Opt 
		&\scriptsize $0.2741$ 
		&\scriptsize $0.8905$ 
		&\scriptsize $20.2877$ 
		&\scriptsize $3.0571$ 
		&\scriptsize $0.1036$ 
		&\scriptsize $39.2660$ \\
		\scriptsize Intent-Reconf 
		&\scriptsize $0.2748$ 
		&\scriptsize $\mathbf{0.8536}$ 
		&\scriptsize $22.7815$ 
		&\scriptsize $3.5088$ 
		&\scriptsize $0.1718$ 
		&\scriptsize $41.5401$ \\
		 \scriptsize GAH 
		&\scriptsize $0.2440$ 
		&\scriptsize $0.8604$ 
		&\scriptsize $23.0266$ 
		&\scriptsize $3.6312$ 
		&\scriptsize $\mathbf{0.0914}$ 
		&\scriptsize $38.6770$ \\
		\rowcolor[gray]{0.9} \scriptsize STEPS 
		&\scriptsize $0.3688$ 
		&\scriptsize $0.9485$ 
		&\scriptsize $\mathbf{4.2142}$ 
		&\scriptsize $1.3832$ 
		&\scriptsize $1.2986$ 
		&\scriptsize $\mathbf{18.7764}$ \\
		\hline
	\end{tabular}
	\vspace{-4mm}
    }
\end{table}
Table~\ref{tab:eua_trace} presents the results, where STEPS achieves the lowest ACSL and post-drift ACSL among all methods, demonstrating that semantic contract-guided scheduling remains effective under realistic spatial distributions of UDs and edge sites. Although Param-Opt obtains a slightly higher AFD and lower ASD due to its access to template-derived numerical service requirements, STEPS reduces ACSL from $4.2859$ to $4.2142$ and post-drift ACSL from $19.6168$ to $18.7764$. Meanwhile, Intent-Reconf and GAH obtain lower CVR but much higher ACSL, indicating that CVR only measures the frequency of threshold-exceeding deviations and does not capture violation severity or overall contract-guided service quality. These results show that STEPS improves contract-guided service loss and post-drift robustness without introducing excessive scheduling time or service delay.

%Overall, the above results confirm that STEPS provides a robust and scalable approach for natural language-driven E4NetAI. By introducing semantic contracts, uncertainty-aware calibration, and fulfillment-driven adaptive optimization, STEPS consistently reduces contract-guided service loss under semantic uncertainty, large-scale deployments, non-stationary drift, and the real-world trace, without incurring excessive admission rejection, delay degradation, or scheduling overhead.

\section{Conclusion}
\label{sec:conclusion}

In this paper, we proposed STEPS, a semantic contract-guided adaptive scheduling framework for natural language-driven E4NetAI. STEPS introduces the semantic contract as an executable interface between natural language requests and resource-constrained edge scheduling, capturing service preferences, fulfillment bounds, and semantic uncertainty. Based on these contracts, we formulated contract-guided scheduling under communication/computation constraints and transformed the per-slot problem into an exact potential game, enabling distributed EN selection and service provisioning. To handle non-stationary environments, STEPS further uses execution feedback to update semantic admission, contract conservativeness, and edge coordination. The theoretical analysis established the existence of a pure-strategy equilibrium, finite-step convergence of the asynchronous best-response dynamics, and boundedness of the adaptive control parameters. Synthetic and real-world-inspired experiments showed that STEPS reduces contract-guided service loss, maintains competitive fulfillment performance, and improves post-drift robustness with acceptable execution and delay overhead.

\vspace{-3mm}

\begin{spacing}{0.9}
	%\footnotesize
	\bibliographystyle{ieeetr}
	\bibliography{reference}

@article{mao2024green,
  title={Green edge {AI}: A contemporary survey},
  author={Mao, Yuyi and Yu, Xianghao and Huang, Kaibin and Zhang, Ying-Jun Angela and Zhang, Jun},
  journal={Proc. IEEE},
  volume={112},
  number={7},
  pages={880--911},
  year={2024},
  publisher={IEEE}
}

@article{he2025task,
  author={He, Yinghui and Li, Xin and Luo, Jun},
  journal={IEEE Trans. Mobile Comput.}, 
  title={Task-Oriented Integrated Sensing and Semantic Communications for Multi-Device Video Analytics}, 
  year={2026},
  volume={25},
  number={5},
  pages={7323-7337},
  publisher={IEEE}
}

@article{jia2025comprehensive,
  author={Jia, Ninghui and Qu, Zhihao and Ye, Baoliu and Wang, Yanyan and Hu, Shihong and Guo, Song},
  journal={IEEE Commun. Surveys Tut.}, 
  title={A Comprehensive Survey on Communication-Efficient Federated Learning in Mobile Edge Environments}, 
  year={2025},
  volume={27},
  number={6},
  pages={3710-3741},
  publisher={IEEE}
}

@article{qi2025future,
  title={Future Resource Bank for {ISAC}: Achieving Fast and Stable Win-Win Matching for Both Individuals and Coalitions},
   author={Qi, Houyi and Liwang, Minghui and Hosseinalipour, Seyyedali and Fu, Liqun and Zou, Sai and Ni, Wei},
  journal={IEEE J. Sel. Areas Commun.}, 
  title={Future Resource Bank for ISAC: Achieving Fast and Stable Win-Win Matching for Both Individuals and Coalitions}, 
  year={2026},
  volume={44},
  number={},
  pages={513-530},
  publisher={IEEE}
}

@article{fan2025latency,
  title={Latency-aware joint task offloading and energy control for cooperative mobile edge computing},
  author={Fan, Weibei and Xiao, Fu and Pan, Yao and Chen, Xiaobai and Han, Lei and Yu, Shui},
  journal={IEEE Trans. Serv. Comput.},
   year={2025},
  number={3},
  pages={1515-1528},
}

@article{wang2025survey,
  title={A survey on intent-driven end-to-end {6G} mobile communication system},
  author={Wang, Yao and Yang, Chungang and Li, Tong and Ouyang, Ying and Mi, Xinru and Song, Yanbo},
  journal={IEEE Commun. Surveys Tut.},
    year={2026},
  volume={28},
  number={},
  pages={882-915},
  publisher={IEEE}
}

@inproceedings{mekrache2024llm,
  title={{LLM}-enabled intent-driven service configuration for next generation networks},
  author={Mekrache, Abdelkader and Ksentini, Adlen},
  booktitle={2024 IEEE Int. Conf. Netw. Softwarization (NetSoft)},
  pages={253--257},
  year={2024},
  organization={IEEE}
}

@article{qin2025generative,
  author={Qin, Xiaoqi and Sun, Mengying and Dai, Jincheng and Ma, Peixuan and Cao, Yuecheng and Zhang, Jingjing and Wang, Jiacheng and Xu, Xiaodong and Zhang, Ping and Niyato, Dusit},
  journal={IEEE Trans. Cogn. Commun. Netw.}, 
  title={Generative {AI} Meets Wireless Networking: An Interactive Paradigm for Intent-Driven Communications}, 
  year={2025},
  volume={11},
  number={4},
  pages={2056-2077},
  organization={IEEE}
}

@article{sun2026igaa,
  title={{IGAA}: Intent-Driven General Agentic {AI} for Edge Services Scheduling using Generative Meta Learning},
  author={Sun, Yan and Liu, Yinqiu and Guo, Shaoyong and Zhang, Ruichen and Qi, Feng and Qiu, Xuesong and Gong, Weifeng and Niyato, Dusit and Wu, Qihui},
  journal={arXiv:2601.13702},
  year={2026}
}

@article{kalntis2024adaptive,
  title={Adaptive resource allocation for virtualized base stations in {O-RAN} with online learning},
  author={Kalntis, Michail and Iosifidis, George and Kuipers, Fernando A},
  journal={IEEE Trans. Commun.},
  volume={73},
  number={3},
  pages={1787--1800},
  year={2024},
  publisher={IEEE}
}

@inproceedings{jacobs2021hey,
  title={Hey, {Lumi}! using natural language for $\{$intent-based$\}$ network management},
  author={Jacobs, Arthur S and Pfitscher, Ricardo J and Ribeiro, Rafael H and Ferreira, Ronaldo A and Granville, Lisandro Z and Willinger, Walter and Rao, Sanjay G},
  booktitle={2021 Usenix Annu. Tech. Conf. (Usenix atc 21)},
  pages={625--639},
  year={2021}
}

@article{mekrache2024intent,
  title={Intent-based management of next-generation networks: An {LLM}-centric approach},
  author={Mekrache, Abdelkader and Ksentini, Adlen and Verikoukis, Christos},
  journal={IEEE Netw.},
  volume={38},
  number={5},
  pages={29--36},
  year={2024},
  publisher={IEEE}
}

@inproceedings{wang2025intent,
  title={Intent-driven network management with multi-agent {LLMs}: The confucius framework},
  author={Wang, Zhaodong and Lin, Samuel and Yan, Guanqing and Ghorbani, Soudeh and Yu, Minlan and Zhou, Jiawei and Hu, Nathan and Baruah, Lopa and Peters, Sam and Kamath, Srikanth and others},
  booktitle={Proc. ACM SIGCOMM 2025 Conf.},
  pages={347--362},
  year={2025}
}

@inproceedings{akbari2025intentcontinuum,
  title={Intentcontinuum: Using {LLM}s to support intent-based computing across the compute continuum},
  author={Akbari, Negin and Grundy, John and Cheema, Aamir and Toosi, Adel N},
  booktitle={IEEE Int. Con. Web Serv.},
  pages={573--583},
  year={2025},
}

@article{xu2024unleashing,
  title={Unleashing the power of edge-cloud generative {AI} in mobile networks: A survey of {AIGC} services},
  author={Xu, Minrui and Du, Hongyang and Niyato, Dusit and Kang, Jiawen and Xiong, Zehui and Mao, Shiwen and Han, Zhu and Jamalipour, Abbas and Kim, Dong In and Shen, Xuemin and others},
  journal={IEEE Commun. Surveys Tut.},
  volume={26},
  number={2},
  pages={1127--1170},
  year={2024},
  publisher={IEEE}
}

@article{xiao2024adaptive,
  title={Adaptive compression offloading and resource allocation for edge vision computing},
  author={Xiao, Wenjing and Hao, Yixue and Liang, Junbin and Hu, Long and Alqahtani, Salman A and Chen, Min},
  journal={IEEE Trans. Cogn. Commun. Netw.},
  volume={10},
  number={6},
  pages={2357--2369},
  year={2024},
  publisher={IEEE}
}

@article{li2024optimal,
  title={Optimal {AI} model splitting and resource allocation for device-edge co-inference in multi-user wireless sensing systems},
  author={Li, Xian and Bi, Suzhi},
  journal={IEEE Trans. Wireless Commun.},
  volume={23},
  number={9},
  pages={11094--11108},
  year={2024},
  publisher={IEEE}
}

@article{zhang2025beyond,
  title={Beyond the cloud: Edge inference for generative large language models in wireless networks},
  author={Zhang, Xinyuan and Nie, Jiangtian and Huang, Yudong and Xie, Gaochang and Xiong, Zehui and Liu, Jiang and Niyato, Dusit and Shen, Xuemin},
  journal={IEEE Trans. Wireless Commun.},
  volume={24},
  number={1},
  pages={643--658},
  year={2025},
  publisher={IEEE}
}

@article{qu2025mobile,
  title={Mobile edge intelligence for large language models: A contemporary survey},
  author={Qu, Guanqiao and Chen, Qiyuan and Wei, Wei and Lin, Zheng and Chen, Xianhao and Huang, Kaibin},
  journal={IEEE Commun. Surveys Tut.},
  volume={27},
  number={6},
  pages={3820--3860},
  year={2025},
  publisher={IEEE}
}

@article{ganguly2024online,
  author={Ganguly, Bhargav and Aggarwal, Vaneet},
  journal={IEEE/ACM Trans. Netw.}, 
  title={Online Federated Learning via Non-Stationary Detection and Adaptation Amidst Concept Drift}, 
  year={2024},
  volume={32},
  number={1},
  pages={643-653},
  publisher={IEEE}
}

@article{gudepu2024drift,
  title={The drift handling framework for open radio access networks: An experimental evaluation},
  author={Gudepu, Venkateswarlu and Chintapalli, Venkatarami Reddy and Castoldi, Piero and Valcarenghi, Luca and Tamma, Bheemarjuna Reddy and Kondepu, Koteswararao},
  journal={Comput. Netw.},
  volume={243},
  pages={110290},
  year={2024},
  publisher={Elsevier}
}

@article{uzlaner2025asynchronous,
  title={Asynchronous online adaptation via modular drift detection for deep receivers},
  author={Uzlaner, Nicole and Raviv, Tomer and Shlezinger, Nir and Todros, Koby},
  journal={IEEE Trans. Wireless Commun.},
  volume={24},
  number={5},
  pages={4454--4468},
  year={2025},
  publisher={IEEE}
}

@article{ameur2025dual,
  title={Dual self-attention is what you need for model drift detection in {6G} networks},
  author={Ameur, Mazene and Brik, Bouziane and Ksentini, Adlen},
  journal={IEEE Trans. Mach. Learn. Commun. Netw.},
  year={2025},
  volume={3},
  number={},
  pages={690-709},
  publisher={IEEE}
}

@ARTICLE{EUAdata,
	author={He, Qiang and Cui, Guangming and Zhang, Xuyun and Chen, Feifei and Deng, Shuiguang and Jin, Hai and Li, Yanhui and Yang, Yun},
	journal={IEEE Trans. Parallel Distrib. Syst.}, 
	title={A Game-Theoretical Approach for User Allocation in Edge Computing Environment}, 
	year={2020},
	volume={31},
	number={3},
	pages={515-529},
	publisher={IEEE}
	}

@ARTICLE{rosenthal1973class,
author = {Rosenthal, Robert W.},
title = {A class of games possessing pure-strategy Nash equilibria},
year = {1973},
issue_date = {Dec 1973},
publisher = {Physica-Verlag GmbH},
address = {DEU},
volume = {2},
number = {1},
issn = {0020-7276},
url = {https://doi.org/10.1007/BF01737559},
doi = {10.1007/BF01737559},
journal = {Int. J. Game Theory},
month = dec,
pages = {65–67},
numpages = {3}}

@ARTICLE{kelly1998rate,
author = {F P Kelly and A K Maulloo and D K H Tan},
title = {Rate control for communication networks: shadow prices, proportional fairness and stability},
journal = {J. Oper. Res. Soc.},
volume = {49},
number = {3},
pages = {237--252},
year = {1998},
publisher = {Taylor \& Francis},
doi = {10.1057/palgrave.jors.2600523}}

@ARTICLE{wang2024cooperative,
  author={Wang, Yuao and Fang, Jingjing and Cheng, Yao and She, Hao and Guo, Yongan and Zheng, Gan},
  journal={IEEE J. Sel. Topics Signal Process.}, 
  title={Cooperative End-Edge-Cloud Computing and Resource Allocation for Digital Twin Enabled {6G} Industrial {IoT}}, 
  year={2024},
  volume={18},
  number={1},
  pages={124-137},
  publisher={IEEE}
}

@ARTICLE{zhang2025communication,
  author={Zhang, Kai and He, Hengtao and Song, Shenghui and Zhang, Jun and Letaief, Khaled B.},
  journal={IEEE J. Sel. Topics Signal Process.}, 
  title={Communication-Efficient Distributed On-Device {LLM} Inference Over Wireless Networks}, 
  year={2025},
  volume={19},
  number={7},
  pages={1301-1317},
}

@ARTICLE{zhang2025split,
  author={Zhang, Songge and Cheng, Guoliang and Wu, Wen and Huang, Xinyu and Song, Lingyang and Shen, Xuemin},
  journal={IEEE J. Sel. Topics Signal Process.}, 
  title={Split Fine-Tuning for Large Language Models in Wireless Networks}, 
  year={2025},
  volume={19},
  number={7},
  pages={1376-1391},
    publisher={IEEE}
}

@ARTICLE{ji2024computational,
  author={Ji, Zelin and Qin, Zhijin},
  journal={IEEE J. Sel. Topics Signal Process.}, 
  title={Computational Offloading in Semantic-Aware Cloud-Edge-End Collaborative Networks}, 
  year={2024},
  volume={18},
  number={7},
  pages={1235-1248},
  publisher={IEEE}
}

@ARTICLE{Abdisarabshali2026Hierarchical,
  author={Abdisarabshali, Payam and Nadimi, Fardis and Borazjani, Kasra and Khosravan, Naji and Liwang, Minghui and Ni, Wei and Niyato, Dusit and Langberg, Michael and Hosseinalipour, Seyyedali},
  journal={IEEE Commun. Mag.}, 
  title={Hierarchical Federated Foundation Models over Wireless Networks for Multi-Modal Multi-Task Intelligence: Integration of Edge Learning with {D2D/P2P}-Enabled Fog Learning Architectures}, 
  year={2026},
  volume={64},
  number={4},
  pages={66-72},
  publisher={IEEE}
}
\end{spacing}

\newpage
\clearpage
\appendices

\section{Supplementary Analysis of STEPS}\label{app:steps_analysis}
\subsection{Computational Complexity of STEPS}
\label{app:complexity}

The computational complexity of STEPS is dominated by the asynchronous best-response procedure. Specifically, for each admitted UD $u_i$, the size of its action space is
$	|\mathcal{A}_{i}^{(t)}|=\mathcal{O}\left(|\mathcal{E}_0||\mathcal{R}^{\mathsf{f}}||\mathcal{R}^{\mathsf{b}}|\right)$, where $|\mathcal{R}^{\mathsf{f}}|$ and $|\mathcal{R}^{\mathsf{b}}|$ denote the maximum cardinalities of computing-resource and bandwidth-resource package sets. During each update round, every admitted UD evaluates all candidate actions once. Therefore, the complexity of a $I_{\max}$ update rounds (i.e., the overall per-slot complexity) is $\mathcal{O}\left(I_{\mathsf{max}}|\mathcal{U}^{\mathsf{s},(t)}||\mathcal{E}_0||\mathcal{R}^{\mathsf{f}}||\mathcal{R}^{\mathsf{b}}|\right). $
In addition, semantic parsing and contract generation are performed once for each incoming UD. As a result, the corresponding overhead scales linearly with $|\mathcal{U}^{(t)}|$, excluding the internal inference cost of the selected LLM parser. Consequently, the dominant computational burden of STEPS arises from the contract-guided game-solving process rather than semantic contract construction.

\subsection{Proofs of Key Properties}
\label{app:proofs_key_properties}
\setcounter{thm}{0}
%We next establish several key properties of STEPS. Recall that the original problem $\mathcal{P}$, is transformed into a distributed per-slot scheduling game through the penalized surrogate formulation introduced in Section~\ref{sec:potential_game}. Accordingly, the following analysis focuses on the resulting surrogate game rather than the original long-horizon optimization problem with hard capacity constraints. The derived results characterize the equilibrium structure, convergence behavior, and stability properties of the game-based scheduling mechanism implemented by STEPS.

%Throughout this subsection, the timeslot index $(t)$ is retained to emphasize that the analysis applies to each per-slot game $\mathcal{G}^{(t)}$ generated by the full STEPS framework. For a given timeslot $t$, all pre-decision system states and adaptive-control parameters entering $\mathcal{G}^{(t)}$ are treated as fixed during the corresponding per-slot game. Under this fixed-per-slot convention, the interaction among UDs arises solely through the edge-side congestion penalties induced by shared resource usage.

\begin{thm}(Exact Potential Property of STEPS)
	For each timeslot $t$, the contract-guided scheduling game $\mathcal{G}^{(t)}$ is an exact potential game with the potential function $\Phi^{(t)}(\mathbf{a}^{(t)})$ defined in~\eqref{eq:potential}. Specifically, for any UD $u_i\in\mathcal{U}^{\mathsf{s},(t)}$, fixed $\mathbf{a}_{-i}^{(t)}$, and any two feasible actions $a_i^{(t)},\widehat{a}_i^{(t)}\in\mathcal{A}_i^{(t)}$, we have $ \mathcal{J}_i^{(t)}(\widehat{a}_i^{(t)},\mathbf{a}_{-i}^{(t)})	-	\mathcal{J}_i^{(t)}(a_i^{(t)},\mathbf{a}_{-i}^{(t)})	=	\Phi^{(t)}(\widehat{a}_i^{(t)},\mathbf{a}_{-i}^{(t)})	-	\Phi^{(t)}(a_i^{(t)},\mathbf{a}_{-i}^{(t)}).$
\end{thm}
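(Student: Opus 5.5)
The plan is a direct computation. Fix a timeslot $t$, a UD $u_i\in\mathcal{U}^{\mathsf{s},(t)}$, the profile $\mathbf{a}_{-i}^{(t)}$, and two actions $a_i^{(t)},\widehat{a}_i^{(t)}\in\mathcal{A}_i^{(t)}$. I will split $\Phi^{(t)}$ in~\eqref{eq:potential} into the service-loss part $\sum_{u_k}H_k^{(t)}(a_k^{(t)})$ and the congestion part $(1+\chi^{(t)})\sum_{e_j}\sum_{\mathsf z}\Gamma_j^{\mathsf z,(t)}(L_j^{\mathsf z,(t)})$. Each part will then be matched with the corresponding term of $\mathcal{J}_i^{(t)}$ in~\eqref{eq:user_cost}. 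Throughout I use the fact, stated just before the theorem, that all pre-decision states and the parameters $\chi^{(t)},\beta^{(t)},\lambda^{(t)}$ are fixed within the slot. This makes $H_k^{(t)}$ depend only on $a_k^{(t)}$, since delay, energy, cost, T4E and the slacks in~\eqref{eq:action_slack} are all functions of $u_k$'s own action.

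\textbf{Step 1: service-loss part.} When only $u_i$ deviates, every $H_k^{(t)}$ with $k\neq i$ is unchanged. The difference in this part of $\Phi^{(t)}$ is therefore exactly $H_i^{(t)}(\widehat a_i^{(t)})-H_i^{(t)}(a_i^{(t)})$, which is the difference in the first term of $\mathcal{J}_i^{(t)}$.

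\textbf{Step 2: congestion part.} By~\eqref{eq:intended_compute_load} and~\eqref{eq:load_excluding_i}, for every ES $e_j$ and $\mathsf z$ we have
$L_j^{\mathsf z,(t)}(a_i^{(t)},\mathbf a_{-i}^{(t)})=L_{j,-i}^{\mathsf z,(t)}(\mathbf a_{-i}^{(t)})+\mathbb{I}_{\{e(a_i^{(t)})=e_j\}}r_i^{\mathsf z,(t)}(a_i^{(t)})$.
For each fixed $\mathsf z$, I write
$\sum_{e_j}\Gamma_j^{\mathsf z,(t)}(L_j^{\mathsf z,(t)}(a_i^{(t)},\mathbf a_{-i}^{(t)}))=\sum_{e_j}\Gamma_j^{\mathsf z,(t)}(L_{j,-i}^{\mathsf z,(t)})+\Delta\Gamma_i^{\mathsf z,(t)}(a_i^{(t)},\mathbf a_{-i}^{(t)})$
and check it by cases. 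If $e(a_i^{(t)})=e_j\in\mathcal E$, only the term for $e_j$ differs from its $-i$ counterpart, and the difference is exactly the marginal congestion cost by its definition. If $e(a_i^{(t)})=e_0$, no ES load changes, and $\Delta\Gamma_i^{\mathsf z,(t)}=0$ by convention. The first sum on the right is independent of $u_i$'s action. So when this identity is evaluated at $\widehat a_i^{(t)}$ and at $a_i^{(t)}$ and the two are subtracted, it cancels. The congestion difference in $\Phi^{(t)}$ then equals $(1+\chi^{(t)})\sum_{\mathsf z}[\Delta\Gamma_i^{\mathsf z,(t)}(\widehat a_i^{(t)},\cdot)-\Delta\Gamma_i^{\mathsf z,(t)}(a_i^{(t)},\cdot)]$, which matches the second term of $\mathcal{J}_i^{(t)}$.

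\textbf{Step 3: conclusion.} Adding Steps 1 and 2 gives the claimed identity. By definition, this makes $\mathcal{G}^{(t)}$ an exact potential game with potential $\Phi^{(t)}$.

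The only delicate point is the bookkeeping in Step 2. The identity must hold uniformly over local versus edge actions and for deviations between different ESs. Deviations from $e_j$ to $e_{j'}$ are handled automatically, because the per-action identity is proved first and only then differenced. I also need that $\Gamma_j^{\mathsf z,(t)}$, through its dependence on $F_j^{(t)}$ and $B_j^{(t)}$, and the factor $(1+\chi^{(t)})$ do not depend on the action profile, which holds under the fixed-per-slot convention.
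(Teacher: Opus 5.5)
Your proof is correct and is essentially the paper's argument. You split $\Phi^{(t)}$ into the service-loss part, where only $H_i^{(t)}$ changes under a unilateral deviation, and the congestion part, where the baseline $\sum_{e_j}\Gamma_j^{\mathsf z,(t)}(L_{j,-i}^{\mathsf z,(t)})$ depends only on $\mathbf a_{-i}^{(t)}$, cancels on differencing, and leaves exactly the difference of the marginal terms $\Delta\Gamma_i^{\mathsf z,(t)}$. Your explicit edge-versus-local case check of the per-action identity makes rigorous the step the paper states in one line.
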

\begin{proof}
	Consider an arbitrary timeslot $t$. Fix $\mathbf{a}_{-i}^{(t)}$ and suppose that UD $u_i$ changes its action from $a_i^{(t)}$ to $\widehat{a}_i^{(t)}$. To establish the exact-potential property, it suffices to show that the resulting change in the potential function in~\eqref{eq:potential} is identical to the change in the scheduling cost of UD $u_i$ in~\eqref{eq:user_cost}. Under the fixed-per-slot convention, the contract-guided service loss of any other UD $u_k\neq u_i$, denoted by $H_k^{(t)}(a_k^{(t)})$ as defined in~\eqref{eq:H_loss}, is independent of $u_i$'s action. Thus, the change in the aggregate service-loss term in~\eqref{eq:potential} is simply $H_i^{(t)}(\widehat{a}_i^{(t)})-H_i^{(t)}(a_i^{(t)})$.
	
	For the congestion component, fix
{\small$\mathbf a_{-i}^{(t)}$}. For each
{\small$\mathsf z\in\{\mathsf f,\mathsf b\}$}, the ES-side load
generated by the other admitted UDs is
{\small$L_{j,-i}^{\mathsf z,(t)}(\mathbf a_{-i}^{(t)})$} as defined
in~\eqref{eq:load_excluding_i}. Therefore, for any candidate action
{\small$a_i^{(t)}$}, the change in the aggregate ES congestion potential
caused by adding UD {\small$u_i$}'s action is exactly
{\small$\Delta\Gamma_i^{\mathsf z,(t)}(a_i^{(t)},\mathbf a_{-i}^{(t)})$}.
Consequently,
\[
\begin{aligned}
&\sum_{e_j\in\mathcal E}
\Gamma_j^{\mathsf z,(t)}
\!\left(
L_j^{\mathsf z,(t)}(\widehat a_i^{(t)},\mathbf a_{-i}^{(t)})
\right)
-
\sum_{e_j\in\mathcal E}
\Gamma_j^{\mathsf z,(t)}
\!\left(
L_j^{\mathsf z,(t)}(a_i^{(t)},\mathbf a_{-i}^{(t)})
\right)
\\
&=
\Delta\Gamma_i^{\mathsf z,(t)}
(\widehat a_i^{(t)},\mathbf a_{-i}^{(t)})
-
\Delta\Gamma_i^{\mathsf z,(t)}
(a_i^{(t)},\mathbf a_{-i}^{(t)}),
\end{aligned}
\]
because the congestion potential contributed by the fixed
profile {\small$\mathbf a_{-i}^{(t)}$} is common to both action
profiles and cancels out.
	Combining the service-loss and congestion-potential changes gives
	\[
	\begin{aligned}
		\Phi^{(t)}&(\widehat{a}_i^{(t)},\mathbf{a}_{-i}^{(t)})
		-
		\Phi^{(t)}(a_i^{(t)},\mathbf{a}_{-i}^{(t)}) 
		=
		H_i^{(t)}(\widehat{a}_i^{(t)})-H_i^{(t)}(a_i^{(t)})
		\\&+
		(1+\chi^{(t)})
		\big[
		\Delta\Gamma_i^{\mathsf{f},(t)}(\widehat{a}_i^{(t)},\mathbf{a}_{-i}^{(t)})
		-
		\Delta\Gamma_i^{\mathsf{f},(t)}(a_i^{(t)},\mathbf{a}_{-i}^{(t)})
		\big] \\
		&+
		(1+\chi^{(t)})
		\big[
		\Delta\Gamma_i^{\mathsf{b},(t)}(\widehat{a}_i^{(t)},\mathbf{a}_{-i}^{(t)})
		-
		\Delta\Gamma_i^{\mathsf{b},(t)}(a_i^{(t)},\mathbf{a}_{-i}^{(t)})
		\big] \\
		&=~
		\mathcal{J}_i^{(t)}(\widehat{a}_i^{(t)},\mathbf{a}_{-i}^{(t)})
		-
		\mathcal{J}_i^{(t)}(a_i^{(t)},\mathbf{a}_{-i}^{(t)}).
	\end{aligned}
	\]
	This proves that the scheduling game $\mathcal{G}^{(t)}$ admits $\Phi^{(t)}$ as an exact potential function.
\end{proof}

\begin{thm}(Existence of Pure-Strategy Nash Equilibrium)
	For each timeslot $t$, the contract-guided scheduling game $\mathcal{G}^{(t)}$ admits at least one pure-strategy Nash equilibrium.
\end{thm}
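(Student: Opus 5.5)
The plan is the standard finite-potential-game argument, built on Theorem~\ref{thm:exact_potential}. First, I will check that the joint action space $\mathcal{A}^{(t)}=\prod_{u_i\in\mathcal{U}^{\mathsf{s},(t)}}\mathcal{A}_i^{(t)}$ is finite and nonempty. The player set $\mathcal{U}^{\mathsf{s},(t)}$ is finite. By the definition of the action space, each $\mathcal{A}_i^{(t)}$ is a union of finitely many triples over the finite sets $\mathcal{E}_0$, $\mathcal{R}_{i}^{\mathsf{loc},\mathsf{f},(t)}$, $\mathcal{R}_{i,j}^{\mathsf{f},(t)}$ and $\mathcal{R}_{i,j}^{\mathsf{b},(t)}$. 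Nonemptiness needs a mild assumption that at least one package set, for instance the local computing-package set, is nonempty; I will state this explicitly. If $\mathcal{U}^{\mathsf{s},(t)}=\emptyset$, the game is trivial and the empty profile is vacuously an equilibrium.

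Second, I will verify that $\Phi^{(t)}$ is real-valued, i.e., finite, at every profile. Every term in $H_i^{(t)}$ is finite for the following reasons. The denominators $\widetilde{D}_i^{\mathsf{max},(t)}$, $\widetilde{E}_i^{\mathsf{max},(t)}$ and $\widetilde{M}_i^{\mathsf{max},(t)}$ are positive whenever the reference indicators and the $\delta$-maps are positive, which I will assume. The $\epsilon>0$ regularization keeps the delay and energy terms in \eqref{eq:delay}--\eqref{eq:energy} finite. The congestion penalties \eqref{eq:congestion_penalty} are finite quadratic functions of finite loads. Since $\Phi^{(t)}$ takes finitely many finite values on the finite set $\mathcal{A}^{(t)}$, it attains a minimum at some $\mathbf{a}^{\star}\in\mathcal{A}^{(t)}$.

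Third, I will show that $\mathbf{a}^{\star}$ is a pure-strategy Nash equilibrium. Suppose some UD $u_i$ had a profitable deviation $\widehat a_i$, meaning $\mathcal{J}_i^{(t)}(\widehat a_i,\mathbf a^{\star}_{-i})<\mathcal{J}_i^{(t)}(a^{\star}_i,\mathbf a^{\star}_{-i})$. Theorem~\ref{thm:exact_potential} then gives $\Phi^{(t)}(\widehat a_i,\mathbf a^{\star}_{-i})<\Phi^{(t)}(\mathbf a^{\star})$, which contradicts the minimality of $\mathbf{a}^{\star}$. Hence no unilateral deviation is profitable. This is the classical Monderer--Shapley/Rosenthal argument.

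The only step needing care is the well-definedness check in the second step: positivity of the calibrated bounds and finiteness of all cost terms. The rest follows directly from the exact-potential identity already established.
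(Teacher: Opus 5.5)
Your proposal is correct and follows the paper's proof essentially verbatim: finiteness of $\mathcal{A}^{(t)}$ yields a global minimizer of $\Phi^{(t)}$, and by the exact-potential identity of Theorem~\ref{thm:exact_potential} any profitable unilateral deviation from that minimizer would strictly decrease $\Phi^{(t)}$, a contradiction. Your extra well-posedness checks (nonempty action sets, and positive calibrated bounds so that $\Phi^{(t)}$ is real-valued) are conditions the paper leaves tacit, and the positivity condition is genuinely needed: for example, a local-execution reference policy gives $\bar{M}_i^{(t)}=0$ and hence $\widetilde{M}_i^{\mathsf{max},(t)}=0$.
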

\begin{proof}
	For each admitted user, the action set $\mathcal{A}_i^{(t)}$ is finite because both the EN set and the resource-package sets are finite. Hence, the joint action space $\mathcal{A}^{(t)}=\prod_{u_i\in\mathcal{U}^{\mathsf{s},(t)}}\mathcal{A}_i^{(t)}$ is finite. By Theorem~\ref{thm:exact_potential}, $\mathcal{G}^{(t)}$ is an exact potential game with potential function $\Phi^{(t)}$. Since $\mathcal{A}^{(t)}$ is finite, $\Phi^{(t)}$ attains a minimum over $\mathcal{A}^{(t)}$. Let $\mathbf{a}^{\star,(t)}$ denote an action profile that minimizes $\Phi^{(t)}$ over $\mathcal{A}^{(t)}$.
	We next show that $\mathbf{a}^{\star,(t)}$ is a Nash equilibrium. Suppose, by contradiction, that $\mathbf{a}^{\star,(t)}$ is not a pure-strategy Nash equilibrium. Then there exists a UD that can unilaterally switch to another feasible action and strictly decrease its scheduling cost. By the exact-potential property established in Theorem~\ref{thm:exact_potential}, the same unilateral deviation would strictly decrease the potential function $\Phi^{(t)}$. This contradicts the assumption that $\mathbf{a}^{\star,(t)}$ is a global minimizer of $\Phi^{(t)}$.
	
	Therefore, no UD can improve its scheduling cost through a unilateral deviation from $\mathbf{a}^{\star,(t)}$ , implying that $\mathbf{a}^{\star,(t)}$ is a pure-strategy Nash equilibrium. Hence, $\mathcal{G}^{(t)}$ admits at least one pure-strategy Nash equilibrium.
\end{proof}
\begin{thm}(Finite-Step Convergence Under Best Response)
	For each timeslot $t$, consider the asynchronous strict best-response process defined in~\eqref{eq:best_response_update} for $\mathcal{G}^{(t)}$ without the iteration cap $I_{\mathsf{max}}$. If the process terminates only when a full update round produces no strict improvement, then it converges in finite steps to a pure-strategy Nash equilibrium of the per-slot surrogate game.
\end{thm}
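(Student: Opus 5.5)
The argument is the standard finite-improvement-path argument for finite exact potential games, using Theorem~\ref{thm:exact_potential} and the finiteness established in the proof of Theorem~\ref{thm:pne_existence}. Fix a timeslot $t$. All pre-decision states and adaptive-control parameters are fixed during $\mathcal{G}^{(t)}$, so $\Phi^{(t)}$ is a fixed real-valued function on the finite joint action space $\mathcal{A}^{(t)}=\prod_{u_i\in\mathcal{U}^{\mathsf{s},(t)}}\mathcal{A}_i^{(t)}$.

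First, track the sequence of action profiles produced by the process, recording a new profile only at each \emph{successful} update. A successful update means the visited UD $u_i$ moves from $a_i^{(t)}$ to $a_i^{\mathsf{br},(t)}$ with $\mathcal{J}_i^{(t)}(a_i^{\mathsf{br},(t)},\mathbf{a}_{-i}^{(t)})<\mathcal{J}_i^{(t)}(a_i^{(t)},\mathbf{a}_{-i}^{(t)})$. By Theorem~\ref{thm:exact_potential}, the potential then satisfies $\Phi^{(t)}(a_i^{\mathsf{br},(t)},\mathbf{a}_{-i}^{(t)})<\Phi^{(t)}(a_i^{(t)},\mathbf{a}_{-i}^{(t)})$. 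Unsuccessful visits leave the profile unchanged. Refreshing loads and congestion penalties after each update only makes the next visited UD's cost consistent with the current profile, so the potential identity applies at every step. Hence the values of $\Phi^{(t)}$ along successful updates form a strictly decreasing sequence.

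Second, a strictly decreasing sequence cannot revisit any profile, so the number of successful updates is at most $|\mathcal{A}^{(t)}|-1$. To sharpen this, let $\Delta_{\min}>0$ be the smallest positive gap $|\Phi^{(t)}(\mathbf{a})-\Phi^{(t)}(\mathbf{a}')|$ over the finitely many pairs of profiles differing in one coordinate. The count is then also bounded by $(\max\Phi^{(t)}-\min\Phi^{(t)})/\Delta_{\min}$. Every improving round contains at least one successful update, and a round consists of finitely many ($|\mathcal{U}^{\mathsf{s},(t)}|$) visits. So without the cap $I_{\mathsf{max}}$, only finitely many improving rounds occur, and the process reaches a round with no strict improvement after finitely many steps.

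Third, identify the terminal profile $\mathbf{a}^{\mathsf{fin},(t)}$ as a pure-strategy Nash equilibrium. In the final round every UD was visited against the fixed profile, since no update occurred and the profile never changed during that round. Each UD computed a minimizer of $\mathcal{J}_i^{(t)}(\cdot,\mathbf{a}_{-i}^{(t)})$ over $\mathcal{A}_i^{(t)}$ and found no strict decrease, so $\mathcal{J}_i^{(t)}(a_i^{(t)},\mathbf{a}_{-i}^{(t)})\le\mathcal{J}_i^{(t)}(a_i',\mathbf{a}_{-i}^{(t)})$ for all $a_i'\in\mathcal{A}_i^{(t)}$, which is the Nash condition.

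The only delicate point, and the one I would state explicitly, is this last step. It relies on the termination round visiting \emph{every} admitted UD, which the asynchronous ordering must guarantee within a full round. It also relies on the absence of updates implying that all visited UDs faced the same $\mathbf{a}_{-i}^{(t)}$ as the final profile. The strict-improvement rule is what rules out cycling among ties, so it should be emphasized.
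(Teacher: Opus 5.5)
Your proposal is correct and follows the same route as the paper's proof. Each effective update strictly lowers the updating UD's cost, so by the exact-potential property it strictly lowers $\Phi^{(t)}$; no profile in the finite space $\mathcal{A}^{(t)}$ can therefore be revisited, and a full round with no strict improvement leaves every admitted UD at a best response. Your extra care (bounding the number of improving rounds by the number of successful updates, and checking that the final round visits every admitted UD against an unchanged profile) makes explicit what the paper states only informally, but note that the $\Delta_{\min}$ bound is an alternative to the bound $|\mathcal{A}^{(t)}|-1$, not a sharpening of it.
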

\begin{proof}
	Consider an arbitrary timeslot $t$ and the asynchronous strict best-response process defined in~\eqref{eq:best_response_update}. By construction, a UD updates its action only when the selected best response strictly reduces its scheduling cost. Therefore, every effective update produces a strict decrease in the scheduling cost of the updating UD.
	
	By Theorem~\ref{thm:exact_potential}, $\mathcal{G}^{(t)}$ is an exact potential game. Consequently, every strict reduction in a UD's scheduling cost induces an equal strict decrease in the potential function $\Phi^{(t)}$. Hence, the sequence of effective updates generates a strictly decreasing sequence of potential values.
	
	Since the joint action space $\mathcal{A}^{(t)}$ is finite, the number of feasible action profiles is finite. As a result, $\Phi^{(t)}$ can attain only finitely many values over $\mathcal{A}^{(t)}$. Because $\Phi^{(t)}$ decreases strictly after every effective update, no action profile can be revisited, and an infinite sequence of effective updates is impossible. Therefore, the best-response process must terminate after a finite number of updates.
	
	At termination, no admitted UD can further reduce its scheduling cost through a unilateral deviation (i.e., every admitted UD reaches a best response to the current actions of the remaining UDs). By definition, the resulting action profile is a pure-strategy Nash equilibrium of $\mathcal{G}^{(t)}$. Therefore, the asynchronous best-response process converges to a contract-stable pure-strategy Nash equilibrium in finite steps.
\end{proof}

Note that, when the iteration cap $I_{\mathsf{max}}$ is enforced, the returned action profile may not be an exact Nash equilibrium. To quantify the residual suboptimality of the finite-iteration solution, the Nash-equilibrium gap can be used, which is {\small$\epsilon_{\mathsf{NE}}^{(t)}=\max_{u_i\in\mathcal{U}^{\mathsf{s},(t)}}\left[\mathcal{J}_i^{(t)}(a_i^{\mathsf{fin},(t)},\mathbf{a}_{-i}^{\mathsf{fin},(t)})-\min_{a_i'\in\mathcal{A}_i^{(t)}}\mathcal{J}_i^{(t)}(a_i',\mathbf{a}_{-i}^{\mathsf{fin},(t)})\right]_+ $}. A smaller $\epsilon_{\mathsf{NE}}^{(t)}$ indicates a more stable finite-iteration output, and $\epsilon_{\mathsf{NE}}^{(t)}=0$ corresponds to an exact Nash equilibrium. While Theorems~\ref{thm:exact_potential}--\ref{thm:finite_convergence} characterize the equilibrium structure and convergence behavior of the per-slot scheduling game, the next result establishes the boundedness of the adaptive-control mechanism used by STEPS.

\begin{thm}(Boundedness of Adaptive Control Parameters)
	For all timeslots $t\in\mathcal{T}$, the adaptive control parameters satisfy	$\tau^{\rho,(t)}\in[\tau_{\mathsf{min}},\tau_{\mathsf{max}}]$,	$\beta^{(t)}\in[\beta_{\mathsf{min}},\beta_{\mathsf{max}}]$, and	$\chi^{(t)}\in[\chi_{\mathsf{min}},\chi_{\mathsf{max}}]$.
	Moreover, if the resource-price updates employ bounded projection, then the resource prices also remain within their prescribed intervals.
\end{thm}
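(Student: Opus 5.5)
The plan is an induction on $t$ that uses only the projection structure of the updates in~\eqref{eq:evolutive_updates} and~\eqref{eq:price_updates}. The key fact is elementary. For any real $x$ and any interval with $a\le b$, the clipped value $[x]_a^b=\min\{\max\{x,a\},b\}$ lies in $[a,b]$. Indeed, $\max\{x,a\}\ge a$, and taking the minimum with $b\ge a$ keeps the result at least $a$ and at most $b$. This holds for every $x$, no matter how large or negative the unprojected argument is. So nothing about the signals $\Psi_{\mathsf{ful}}^{(t)}$ or $P_{\mathsf{adm}}^{(t)}$, the stepsizes, or the loads needs to be bounded. We only need the interval endpoints to satisfy $\tau_{\mathsf{min}}\le\tau_{\mathsf{max}}$, $\beta_{\mathsf{min}}\le\beta_{\mathsf{max}}$, $\chi_{\mathsf{min}}\le\chi_{\mathsf{max}}$, and $\lambda_{\mathsf{min}}^{\mathsf z}\le\lambda_{\mathsf{max}}^{\mathsf z}$, which I will state as a standing assumption.

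\textbf{Base case.} The initial values $\tau^{\rho,(1)}$, $\beta^{(1)}$, $\chi^{(1)}$ and $\lambda_j^{\mathsf z,(1)}$ must lie in their intervals. These values are not generated by a projection, so I will make this an explicit assumption on initialization, as in Table~\ref{tab:main_settings}. If an initialization fell outside an interval, the claim would still hold for all $t\ge2$, and one could alternatively project the initial value once.

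\textbf{Inductive step.} Assume the claim holds at slot $t$. Each update in~\eqref{eq:evolutive_updates} has the form $\theta^{(t+1)}=[\theta^{(t)}+\text{(finite real)}]_{\theta_{\mathsf{min}}}^{\theta_{\mathsf{max}}}$. To apply the key fact, I must check that the argument is a well-defined finite real number, which comes down to checking that the feedback signals are finite.
\begin{itemize}
\item $\Delta_{\mathsf{sem}}^{(t)}$ and $\Delta_{\mathsf{ful}}^{(t)}$ are finite averages over nonempty windows, with the conventions in~\eqref{eq:average_semantic_state} and~\eqref{eq:fulfillment_pressure} covering empty sets.
\item $P_{\mathsf{adm}}^{(t)}\in[0,1]$.
\item $P_{\mathsf{ful}}^{(t)}$ is finite because $\epsilon>0$ keeps the delay and energy terms finite and the original bounds $\mathbf g_i^{(t)}$ are positive.
\end{itemize}
The main, and only mildly delicate, point is this well-definedness check, in particular that the denominators in~\eqref{eq:ful_deviation} are strictly positive; I will assume positive reference indicators and positive mapping functions $\delta_{\mathsf D},\delta_{\mathsf E},\delta_{\mathsf M}$. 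The key fact then places $\tau^{\rho,(t+1)}$, $\beta^{(t+1)}$ and $\chi^{(t+1)}$ in their intervals.

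\textbf{Prices.} The argument for~\eqref{eq:price_updates} is the same once the price updates use bounded projection. The argument of the projection is finite because $F_j^{(t)},B_j^{(t)}>0$, $L_j^{\mathsf z,(t)}$ is a finite sum of package sizes, and $\chi^{(t)}$ is finite by the part of the induction just completed. Finally, I will note a consequence for later use. Since $\chi^{(t)}\le\chi_{\mathsf{max}}$ and $\beta^{(t)}\le\beta_{\mathsf{max}}$, the calibrated bounds in~\eqref{eq:calibrated_bounds} and the potential weights $(1+\chi^{(t)})$ are uniformly controlled over all $t$.
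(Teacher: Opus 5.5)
Your proof is correct and rests on the same fact as the paper's one-line argument: the clipped value $[x]_a^b=\min\{\max\{x,a\},b\}$ always lies in $[a,b]$, which places $\tau^{\rho,(t)}$, $\beta^{(t)}$, $\chi^{(t)}$, and the projected prices in their intervals. Your additions make explicit hypotheses that the paper's proof leaves implicit rather than changing the route: the requirement $a\le b$, the assumption that the initial values at $t=1$ lie in their intervals (which the paper tacitly needs to cover the first slot), and the finiteness checks on the unprojected arguments.
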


\begin{proof}
	The updates in~\eqref{eq:evolutive_updates} use the projection operator
	$[x]_{a}^{b}=\min\{\max\{x,a\},b\}$, whose output always lies in $[a,b]$. Hence, $\tau^{\rho,(t)}$, $\beta^{(t)}$, and $\chi^{(t)}$ remain within their prescribed intervals for all timeslots. The same argument applies to projected price updates in~\eqref{eq:price_updates}.
\end{proof}

This result of the above theorem establishes the stability of the adaptive-control layer of STEPS by ensuring that all feedback-updated parameters remain within predefined feasible regions. Consequently, the long-term operation of the framework is protected from parameter divergence, excessively conservative scheduling behavior, and unstable feedback amplification.
\end{document}